\DeclareRobustCommand{\DLLogo}{%
  \begingroup\normalfont
  \kern-1.75pt\includegraphics[align=c,height=1.25\baselineskip]{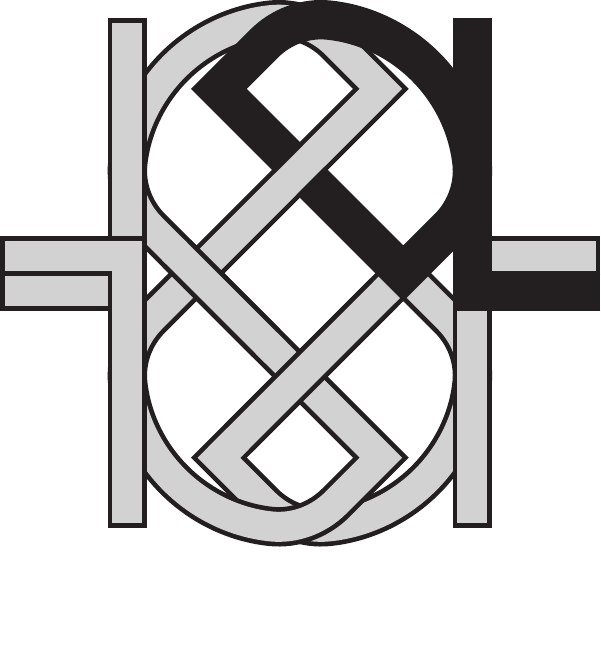}\kern-1.5pt%
  \endgroup
}
\newtheorem{theorem}{Theorem}
\newtheorem{definition}{Definition}
\newtheorem{example}{Example}
\newenvironment{customthm}[1]
  {\innercustomthm}
  {\endinnercustomthm}
\newcommand{\ifandonlyif}{\textit{iff} }
\newcommand{\iffi}{\textit{iff} }
\newcommand{\resp}{resp.}
\newcommand{\sect}{Section}
\newcommand{\fig}{Figure}
\newcommand{\thm}{Theorem}
\newcommand{\ih}{IH}
\definecolor{tim}{RGB}{0, 0, 250}
\newcommand{\gtalc}{\mathsf{G3}\alc}
\newcommand{\gtalce}{\mathsf{G3}\alc^{\star}}
\newcommand{\gtalcee}{\mathsf{G3}\alc^{\ast}}
\newcommand{\cnames}{\mathbf{C}}
\newcommand{\rnames}{\mathbf{R}}
\newcommand{\inames}{\mathbf{I}}
\newcommand{\tbox}{\mathcal{T}}
\newcommand{\abox}{\mathcal{A}}
\newcommand{\branch}{\mathcal{B}}
\newcommand{\fcomp}[1]{\|#1\|}
\newcommand{\ec}[1]{[#1]_{\sim}}
\newcommand{\rrel}[1]{\mathsf{Rel}(#1)}
\newcommand{\kb}{\mathcal{K}}
\newcommand{\inv}[1]{{#1}^{-}}
\newcommand{\trans}[1]{\mathsf{Trans}(#1)}
\newcommand{\comp}{\circ}
\newcommand{\eq}{\approx}
\newcommand{\ineq}{\not\eq}
\newcommand{\univ}{\mathsf{U}}
\newcommand{\self}{\mathsf{Self}}
\newcommand{\irr}[1]{\mathsf{Irr}(#1)}
\newcommand{\refl}[1]{\mathsf{Refl}(#1)}
\newcommand{\asy}[1]{\mathsf{Asy}(#1)}
\newcommand{\funct}[1]{\mathsf{Funct}(#1)}
\newcommand{\disj}[1]{\mathsf{Dis}(#1)}
\newcommand{\vocab}{\mathcal{V}}
\newcommand{\card}[1]{\#{#1}}
\newcommand{\alc}{\mathcal{ALC}}
\newcommand{\s}{\mathcal{S}}
\newcommand{\h}{\mathcal{H}}
\newcommand{\sr}{\mathcal{SR}}
\newcommand{\nom}{\mathcal{O}}
\newcommand{\ir}{\mathcal{I}}
\newcommand{\f}{\mathcal{F}}
\newcommand{\n}{\mathcal{N}}
\newcommand{\q}{\mathcal{Q}}
\newcommand{\inter}{\mathcal{I}}
\newcommand{\dom}{\Delta^{\inter}}
\newcommand{\map}[1]{{#1}^{\inter}}
\newcommand{\cxtl}{\Sigma}
\newcommand{\cxtr}{\Pi}
\newcommand{\rcxtl}{\mathpzc
{R}}
\newcommand{\rcxtr}{\mathpzc
{Q}}
\newcommand{\sar}{\vdash}
\newcommand{\lrf}{F}
\newcommand{\lrff}{F}
\newcommand{\lrfg}{G}
\newcommand{\con}{\sqcap}
\newcommand{\dis}{\sqcup}
\newcommand{\imp}{\sqsubseteq}
\newcommand{\all}{\forall r.}
\newcommand{\some}{\exists r.}
\newcommand{\idc}{(id_{\cnames})}
\newcommand{\idr}{(id_{\rnames})}
\newcommand{\botl}{(\bot_{l})}
\newcommand{\topr}{(\top_{r})}
\newcommand{\botr}{(\bot_{r})}
\newcommand{\topl}{(\top_{l})}
\newcommand{\negl}{(\neg_{l})}
\newcommand{\negr}{(\neg_{r})}
\newcommand{\conl}{(\con_{l})}
\newcommand{\conr}{(\con_{r})}
\newcommand{\disl}{(\dis_{l})}
\newcommand{\disr}{(\dis_{r})}
\newcommand{\impl}{(\imp_{l})}
\newcommand{\impr}{(\imp_{r})}
\newcommand{\alll}{(\forall_{l})}
\newcommand{\allr}{(\forall_{r})}
\newcommand{\existsl}{(\exists_{l})}
\newcommand{\existsr}{(\exists_{r})}
\newcommand{\invi}{(inv(r)_{l})}
\newcommand{\invii}{(inv(\inv{r})_{l})}
\newcommand{\inviii}{(inv(r)_{r})}
\newcommand{\inviv}{(inv(\inv{r})_{r})}
\newcommand{\ltnl}{(\leqslant n r . P_{l})}
\newcommand{\ltnr}{(\leqslant n r . P_{r})}
\newcommand{\gtnl}{(\geqslant n r . P_{l})}
\newcommand{\gtnr}{(\geqslant n r . P_{r})}
\newcommand{\ltunl}{(\leqslant n r_{l})}
\newcommand{\ltunr}{(\leqslant n r_{r})}
\newcommand{\gtunl}{(\geqslant n r_{l})}
\newcommand{\gtunr}{(\geqslant n r_{r})}
\newcommand{\compl}{(\comp_{l})}
\newcommand{\compr}{(\comp_{r})}
\newcommand{\transl}{(\mathsf{Trans}(r)_{l})}
\newcommand{\transr}{(\mathsf{Trans}(r)_{r})}
\newcommand{\crial}{(cria_{l})}
\newcommand{\criar}{(cria_{r})}
\newcommand{\refll}{(\refl{r}_{l})}
\newcommand{\reflr}{(\refl{r}_{r})}
\newcommand{\irrl}{(\irr{r}_{l})}
\newcommand{\irrr}{(\irr{r}_{r})}
\newcommand{\asyl}{(\asy{r}_{l})}
\newcommand{\asyr}{(\asy{r}_{r})}
\newcommand{\disjl}{(\disj{r,s}_{l})}
\newcommand{\disjr}{(\disj{r,s}_{r})}
\newcommand{\noml}{(\{b\}_{l}^{1})}
\newcommand{\nomli}{(\{b\}_{l}^{1})}
\newcommand{\nomr}{(\{b\}_{r}^{1})}
\newcommand{\nomrii}{(\{b\}_{r}^{2})}
\newcommand{\nomlii}{(\{b\}_{l}^{2})}
\newcommand{\ineql}{(\ineq_{l})}
\newcommand{\ineqr}{(\ineq_{r})}
\newcommand{\eql}{(\eq_{l})}
\newcommand{\eqr}{(\eq_{r})}
\newcommand{\eqeucl}{(\mathsf{Euc}(\eq))}
\newcommand{\rnegl}{(\neg \rnames_{l})}
\newcommand{\rnegr}{(\neg \rnames_{r})}
\newcommand{\univl}{(\univ_{l})}
\newcommand{\univr}{(\univ_{r})}
\newcommand{\selfl}{(\self_{l})}
\newcommand{\selfr}{(\self_{r})}
\newcommand{\functl}{(\funct{r}_{l})}
\newcommand{\eqrepli}{(\mathsf{Rep}_{1}(\eq))}
\newcommand{\eqreplii}{(\mathsf{Rep}_{2}(\eq))}
\newcommand{\ddrl}{(\mathsf{Rel}_{l})}
\newcommand{\ddrr}{(\mathsf{Rel}_{r})}
\newcommand{\sub}{(sub)}
\newcommand{\wkl}{(wk_{l})}
\newcommand{\wkr}{(wk_{r})}
\newcommand{\ctrl}{(ctr_{l})}
\newcommand{\ctrr}{(ctr_{r})}
\DeclareMathAlphabet{\mathpzc}{OT1}{pzc}{m}{it}
\begin{document}

%%
%% Rights management information.
%% CC-BY is default license.
\copyrightyear{2022}
\copyrightclause{Copyright for this paper by its authors.
  Use permitted under Creative Commons License Attribution 4.0
  International (CC BY 4.0).}

%%
%% This command is for the conference information
\conference{\DLLogo{} DL 2022: 35th International Workshop on Description Logics,
  August 7--10, 2022, Haifa, Israel}

%%
%% The "title" command
\title{Uniform and Modular Sequent Systems for Description Logics}

%%
%% The "author" command and its associated commands are used to define
%% the authors and their affiliations.
\author[1]{Tim Lyon}[%
orcid=0000-0003-3214-0828,
email=timothy_stephen.lyon@tu-dresden.de,
url=https://iccl.inf.tu-dresden.de/web/Tim_Lyon,
%url=https://sites.google.com/view/timlyon,
]
\address[1]{Computational Logic Group, Institute of Artificial Intelligence, Technische Universität Dresden,
  01062 Dresden, Germany}

\author[1]{Jonas Karge}[%
orcid=0000-0002-8296-1010,
email=jonas.karge@tu-dresden.de,
url=https://iccl.inf.tu-dresden.de/web/Jonas_Karge,
]
%\address[2]{Computational Logic Group, Institute of Artificial Intelligence, Technische Universität Dresden),
 % 01062 Dresden, Germany}

%%
%% The abstract is a short summary of the work to be presented in the
%% article.
\begin{abstract}
We introduce a framework that allows for the construction of sequent systems for expressive description logics extending $\alc$. Our framework not only covers a wide array of common description logics, but also allows for sequent systems to be obtained for extensions of description logics with special formulae that we call \emph{role relational axioms}. All sequent systems are sound, complete, and possess favorable properties such as height-preserving admissibility of common structural rules and height-preserving invertibility of rules.
\end{abstract}

%%
%% Keywords. The author(s) should pick words that accurately describe
%% the work being presented. Separate the keywords with commas.
\begin{keywords}
Sequent Calculus \sep 
Description Logics \sep 
Proof theory 
\end{keywords}

%%
%% This command processes the author and affiliation and title
%% information and builds the first part of the formatted document.
\maketitle

\section{Introduction}

%%%NOTES:
%- Distinction with labelled calc for modal and intermediate is that relational atoms sometimes have role of formula as opposed to "external structure" meaning that we might have to include them on the right and have logical rules for them to some extent
%- Might want to mention somewhere that the cut-free completeness method used here can be used to provide a correct proof search procedure given that proof search terminates.

%What are DLs and how are they used?
%Example of how to model knoweledge
%Lack of over-arching proof theoretic formalism. Typcially, Tableaux are used. We offer Sequent Systems motivated by NegPla11 that uniformly captures a large number of DLs
%Paper shows that such proof systems are complete. Potential applications include: Modifying the systems to obtain calculi for other DLs (e.g. intuitionisitic), Decidability with witnesses, Verstaile (i.e. Can use sequents for many standard reasoning tasks)

Description logics (DLs) consist of an assortment of knowledge representation languages used to structure and represent knowledge in an unequivocal and perspicuous manner. 
%Having evolved out of semantic networks~\cite{Qui68} and frame-based systems~\cite{Min74}, DLs have made their way into a variety of application areas such as natural language processing~\cite{Fra94,GehBurForFra91}, databases~\cite{BaaCalMcGPatNar07}, and the life sciences~\cite{OsuReeMunNeuRutJefArm12,RobBau11}. Moreover, DLs serve a foundational role in the semantic web, where they have been employed in the construction of ontology languages such as OWL~\cite{PatHayHor04}, OWL 2~\cite{GraHorMotParPatSat08}, OIL and DAML+OIL~\cite{BaaHorLutSat17}.
 In DLs, knowledge is represented by means of \emph{knowledge bases (KBs)}, i.e. collections of expressions involving concepts and roles. KBs contain explicit knowledge of a particular domain of interest, and by means of logical consequence, implicit knowledge may be derived, thus giving rise to a need for logical tools to extract information. %Thus, along with the potential to infer implicit knowledge from a knowledge base arises the need for logical tools to extract information. 
  %Moreover, the fact that knowledge bases may include conflicting information calls for tools to determine consistency. 
 In addition, it is reasonable to request that such tools be \emph{automatable}, i.e. it is not only desirable to develop tools that have the potential of deriving information, but which give definitive answers to a problem by means of an algorithm.  It is also worthwhile to possess tools that allow one to constructively prove (meta-)logical properties of DLs (e.g. concept interpolation, or re-writings of concepts and TBoxes), and %. Another desirable feature is that 
 which are applicable to a wide array of DLs, regardless of their idiosyncrasies.

Such tools---meeting the above demands---are capable of being developed on the basis of proof theory. Indeed, various DLs have been equipped with tableau-based proof-search algorithms~\cite{BaaHorLutSat17,DonLenNarNut97,HorSat04,OrtCalEit06,OrtCalEit08,SchSmo91}, resolution-based algorithms~\cite{KazMot06,MotSat06,TriStoChoSta15}, or consequence-based algorithms~\cite{SimKazHor11,Kaz09}, to solve certain reasoning tasks. These works highlight and demonstrate the success of proof-theoretic methods in application to problems of description logics. Therefore, %the provision of 
a proof-theoretic formalism that yields proof systems for a significant number of DLs \emph{on demand} is desirable. %is both %and valuable. 
Hence, the intent of this paper is to propose a uniform and modular framework for generating proof systems---namely, \emph{sequent systems}---for a large class of DLs, in the style of~\cite{NegPla11}. That is, the purpose of this paper is to provide a general recipe for constructing sequent systems for DLs.

Although work has been done on supplying sequent systems for DLs~\cite{BorFraHor00,Hof05,Rad12, Str97}, the systems have been constructed for a relatively narrow set. The distinguishing feature of the present paper is that we provide a formalism for generating sound and complete sequent systems for a sizable class of expressive DLs. Indeed, our work not only covers $\alc$ and its prominent extensions (e.g. $\s\h\ir\nom\q$ and the DL $\sr\nom\ir\q$ that underlies OWL 2~\cite{GraHorMotParPatSat08}), but allows for extensions of expressive DLs with axioms we refer to as \emph{role relational axioms (RRAs)}. 
Such axioms express properties of, and relationships between, roles. For instance, $\trans{r}$ and $\disj{r,s}$, which express that the role $r$ is transitive and the roles $r$ and $s$ are disjoint, respectively, are defined to be instances of role relational axioms. It will be seen that the sequent formalism we provide is both uniform, covering many DLs, and modular, meaning that a sequent system for one DL is straightforwardly transformable into a sequent system for another DL by the addition or deletion of inference rules. Due to space constraints we leave the discussion of complexity related issues as well as proof-search algorithms up to future work.

%In the first section (\sect~\ref{sec:log-prelims}), we introduce expressive DLs, including their semantics and features of their knowledge bases. In the subsequent section (\sect~\ref{sec:sequent-systems}), we introduce a sequent calculus for the \emph{attributive concept language with complements} $\alc$~\cite{SchSmo91}, and define extensions for other expressive DLs along with the addition of rules for RRAs. We argue that all of our sequent calculi are sound, complete, and possess standard properties (e.g. invertibility of rules and admissibility of contraction). %We end the section by providing a simple example of query answering in the sequent calculus for $\alc$. 
 %The last section (\sect~\ref{sec:conclusion}) concludes and discusses future work.
The paper is organized as follows: In (\sect~\ref{sec:log-prelims}), we introduce expressive DLs, including their semantics and features of their knowledge bases. In (\sect~\ref{sec:sequent-systems}), we introduce a sequent calculus for the \emph{attributive concept language with complements} $\alc$~\cite{SchSmo91}, and define extensions for other expressive DLs along with the addition of rules for RRAs. We argue that all of our sequent calculi are sound, complete, and possess standard properties (e.g. invertibility of rules and admissibility of contraction).

%Future benefits of such systems:
%1. Can be refined to obtain complexity-optimal proof-search (by showing rules admissible and trading rules for nicer ones (e.g. through structural refinement)
%2. Can be used to characterize or provide automated reasoning for other types of DLs; e.g. constructive, intuitionistic, or fuzzy (which might requrie hypersequent extensions)

\label{sec:introduction}

\section{Description Logics}\label{sec:log-prelims}

In this section, we present the family of expressive description logics (DLs) (cf.~\cite{OrtSim12}) that will be considered in this paper. This class of logics is obtained by extending %the attributive concept language with complements 
 $\alc$. 
%The DL $\alc$ was introduced in~\cite{SchSmo91} with the aim of enriching the then state-of-the-art knowledge representation languages while still preserving decidability. 
We first define $\alc$ and its associated semantics, and then discuss extensions thereof.

\subsection{Preliminaries and $\alc$}
%NOTES:
%- give example of tbox an abox to make applications explicit?
%- maybe define here (or in last section) various reasoning tasks for DLs?
%- What about the unique name assumptin (UNA)? Pg. 20 of Ortiz

%We introduce the base DL $\alc$ (introduced in~\cite{SchSmo91}) along with its semantics. Since DLs are primarily concerned with knowledge representation and reasoning services (e.g. KB consistency and concept subsumption; see~\cite{BaaHorLutSat17}), we also briefly describe how knowledge can be represented using $\alc$. Therefore, we will also define \emph{knowledge bases} $\kb$, which contain both \emph{terminological} and \emph{assertional} information about entities, their properties, and their relationships (represented by TBoxes and ABoxes respectively, which are described below).

$\alc$, and DLs more generally, are defined relative to a \emph{vocabulary} $\vocab = (\rnames, \cnames, \inames)$ the components of which are taken to be pairwise disjoint, countable sets. Each set contains primitive symbols dedicated to a particular purpose: the set $\rnames$ contains \emph{role names} used to denote binary relations, the set $\cnames$ contains \emph{concept names} used to denote classes of entities, and the set $\inames$ contains \emph{individuals} used to denote particular entities. We use $r$, $s$, $\ldots$ (potentially annotated) to denote role names, $C$, $D$, $\ldots$ (potentially annotated) to denote concept names, and $a$, $b$, $\ldots$ (potentially annotated) to denote individuals. For $\alc$, \emph{complex concepts} are built from role and concept names as dictated by the following BNF grammar:
$$
P ::= C \ | \ \bot \ | \ \top \ | \ \neg P \ | \ P \sqcup P \ | \ P \sqcap P \ | \ \exists r . P \ | \ \forall r . P
$$
where $C \in \cnames$ and $r \in \rnames$. We use the symbols $P$, $Q$, $\ldots$ (potentially annotated) to denote complex concepts. We interpret complex concepts and roles as follows:

%The above language is equivalent to a subclass of both the guarded fragment~\cite{AndNemBen98} and two-variable fragment~\cite{GraKolVar97} of first-order logic. %Intuitively, the special concepts $\bot$ (\emph{bottom}) and $\top$ (\emph{top}) represent the empty concept and universal concept, which
%stipulated by the following definition:

\begin{definition}[Interpretation~\cite{BaaHorLutSat17}] An \emph{interpretation} $\inter = (\dom,\map{\cdot})$ contains a non-empty set $\dom$, called the \emph{domain}, and a map $\map{\cdot}$ such that for every $C \in \cnames$, $\map{C} \subseteq \dom$; for every $r \in \rnames$, $\map{r} \subseteq \dom \times \dom$; and for every $a \in \inames$, $\map{a} \in \dom$. The map $\map{\cdot}$ is extended to complex concept names as follows:

%\begin{itemize}

%\item For every $C \in \cnames$, $\map{C} \subseteq \dom$;

%\item for every $r \in \rnames$, $\map{r} \subseteq \dom \times \dom$;

%\item for every $a \in \inames$, $\map{a} \in \dom$.

%\end{itemize}

\medskip

$\map{\top} := \dom$;  $\map{\bot} := \emptyset$; $\map{C \sqcup D} := \map{C} \cup \map{D}$;  $\map{C \sqcap D} := \map{C} \cap \map{D}$; 

$\map{\exists r. C} := \{a \in \dom \ | \ \text{ there exists $b \in \dom$ s.t. $(a,b) \in \map{r}$ and $b \in \map{C}$.}\}$; 

$\map{\forall r. C} := \{a \in \dom \ | \ \text{ for each $b \in \dom$, if $(a,b) \in \map{r}$, then $b \in \map{C}$.}\}$.

%\begin{itemize}

%\item $\map{\top} = \dom$;

%\item $\map{\bot} = \emptyset$;

%\item $\map{C \sqcup D} = \map{C} \cup \map{D}$;

%\item $\map{C \sqcap D} = \map{C} \cap \map{D}$;

%\item $\map{\neg C} = \dom \setminus \map{C}$;

%\item $\map{\exists r. C} = \{a \in \dom \ | \ \text{ there exists $b \in \dom$ s.t. $(a,b) \in \map{r}$ and $b \in \map{C}$.}\}$;

%\item $\map{\forall r. C} = \{a \in \dom \ | \ \text{ for each $b \in \dom$, if $(a,b) \in \map{r}$, then $b \in \map{C}$.}\}$.

%\end{itemize}
\end{definition}

As is standard for DLs, we collect specific formulae into \emph{TBoxes} to specify certain properties of, and relationships between, concepts and roles. For $\alc$, a TBox 
%(also called a \emph{terminology}, or sometimes an \emph{ontology}~\cite{OrtSim12}) 
is a finite set of \emph{general concept inclusions (GCIs)}, which are formulae of the form $P \sqsubseteq Q$, where $P$ and $Q$ are complex concepts. As explained in the following section (\sect~\ref{subsec:extensions-of-ALC}), we allow for a larger variety of formulae in TBoxes for DLs more expressive than $\alc$.

Typically, for DLs, assertional knowledge is represented by formulae that state whether or not an individual or pair of individuals participate in a concept or role. Such formulae, which are referred to as \emph{assertions}, comprise the \emph{ABox}. For $\alc$, the ABox %is rather simple and 
contains a finite number of \emph{concept assertions} of the form $a : P$ (with $P$ a complex concept and $a \in \inames$) and a finite number of \emph{role assertions} of the form $r(a,b)$ (with $r \in \rnames$ and $a,b \in \inames$). A \emph{knowledge base (KB)} $\kb$
%, which is also sometimes referred to (ambiguously) as an \emph{ontology}~\cite{OrtSim12}, 
is defined to be a pair consisting of a TBox $\tbox$ and an ABox $\abox$, i.e. $\kb = (\tbox,\abox)$. Let us now define how interpretations can be extended to the formulae of TBoxes, ABoxes, and therefore, to KBs. 

\begin{definition}[Model~\cite{BaaHorLutSat17}] An interpretation $\inter = (\dom,\map{\cdot})$ \emph{satisfies} a GCI $P \sqsubseteq Q$, written $\inter \models P \sqsubseteq Q$, \ifandonlyif $\map{P} \subseteq \map{Q}$; a concept assertion $a : P$, written $\inter \models a : P$, \ifandonlyif $\map{a} \in \map{P}$; and a role assertion $r(a,b)$, written $\inter \models r(a,b)$, \ifandonlyif $(\map{a},\map{b}) \in \map{r}$. We say that an intepretation $\inter$ is a \emph{model} of a TBox $\tbox$ (ABox $\abox$) \ifandonlyif it satisfies all formulae in $\tbox$ (all formulae in $\abox$, \resp). An interpretation $\inter$ is a \emph{model} of a KB $\kb = (\tbox, \abox)$ \ifandonlyif it is a model of $\tbox$ and $\abox$.
\end{definition}

%\begin{itemize}

%\item a GCI $P \sqsubseteq Q$, written $\inter \models P \sqsubseteq Q$, \ifandonlyif $\map{P} \subseteq \map{Q}$;

%\item a concept assertion $a : P$, written $\inter \models a : P$, \ifandonlyif $\map{a} \in \map{P}$;

%\item a role assertion $r(a,b)$, written $\inter \models r(a,b)$, \ifandonlyif $(\map{a},\map{b}) \in \map{r}$.

%\end{itemize}

\subsection{Extensions of $\alc$}\label{subsec:extensions-of-ALC}
%NOTES:
%- Functionality on Pg. 32 of Ortiz

The sequent systems provided in the subsequent section allow for a sizable number of DLs to be captured proof-theoretically. We focus our attention on presenting well-known extensions of $\alc$, making use of the well-established naming convention for DLs to do so. Also, we define how new formulae within extensions are satisfied by a given interpretation $\inter = (\dom,\map{\cdot})$.

%\footnote{For an introduction to expressive description logics extending $\alc$, see~\cite{BaaHorLutSat17,OrtSim12}.} 
%We explain how each DL in the class extends $\alc$ and how the name of the DL reflects this extension.

%The naming scheme of our expressive DLs can be succinctly summarized as shown below, and is due to~\cite{Rud11}:
%$$
%((\alc \ | \ \s)[\h] \ | \ \sr)[\nom][\ir][\f \ | \ \n \ | \ \q]
%$$
%A vertical bar `$|$' joins two or more strings/symbols, \emph{only one of which may be included} in a DL name, and the square brackets `[' and `]' indicate that a symbol may \emph{possibly} be included in a DL name; e.g. $\alc\nom\f$, $\s\h\ir\q$, $\sr\nom\ir\n$ are three possible DL names, but $\alc\sr\ir$, $\s\h\sr\nom\q$, and $\alc\f\n\q$ are not. We note that for each possible name constructed, there is a DL with features corresponding to the characters depicted in its name. This relationship between the naming convention and a DL's features is detailed below. Note that for the remainder of the section, we let $\inter = (\dom,\map{\cdot})$ be an arbitrary interpretation, and make use of $\inter$ when defining the semantics for the newly introduced formulae and concepts. % the definitions of which, are explicitly listed. 

\underline{{$\s$}} Prepending the name of a DL with $\s$ (rather than $\alc$) indicates that a TBox is permitted to include \emph{transitivity axioms} of the form $\trans{r}$, or equivalently, axioms of the form $r \comp r \imp r$, where the \emph{composition operation $\comp$} is interpreted accordingly (with $r,s \in \rnames$): $\map{(r \comp s)} := $
$$ \{(a,b) \in \dom \times \dom \ | \ \text{there exists a $c \in \dom$ s.t. $(a,c) \in \map{r}$ and $(c,b) \in \map{s}$.} \} $$
$\inter$ \emph{satisfies} $\trans{r}$, written $\inter \models \trans{r}$, \ifandonlyif $\map{r}$ is transitive.\footnote{$\map{r}$ is transitive \iffi for all $a,b,c \in \dom$, if $(a, b), (b, c) \in \map{r}$, then $(a, c) \in \map{r}$.}

\medskip

\underline{$\h$} Including an $\h$ in the name of a DL (e.g. $\alc\h$) indicates that \emph{simple role inclusions axioms (RIAs)} of the form $r \imp s$ with $r,s \in \rnames$ may be included in a TBox.
$\inter$ \emph{satisfies} $r \imp s$, written $\inter \models r \imp s$, \ifandonlyif $\map{r} \subseteq \map{s}$.

\medskip

\underline{$\sr$} The most notable feature of DLs whose names are prepended with $\sr$ is that such logics allow for \emph{complex role inclusion axioms (CRIAs)} of the form $r_{1} \comp \cdots \comp r_{n} \imp r$ to be included in a TBox.\footnote{We note that syntactic conditions are usually imposed on the form of CRIAs in order to ensure the decidability of the resulting DL (e.g., see~\cite{BaaHorLutSat17,HorSat04}).} %(NB. We assume that $r_{1} \comp \cdots \comp r_{n}$ associates to the right.) 
 Additionally, DLs from the $\sr$ family may include \emph{reflexivity axioms} of the form $\refl{r}$, \emph{irreflexivity axioms} of the form $\irr{r}$, \emph{asymmetry axioms} of the form $\asy{r}$, or \emph{disjointness axioms} of the form $\disj{r,s}$.\footnote{Each property is defined as follows: (i) $\map{r}$ is reflexive \iffi for each $a \in \dom$, $(a, a) \in \map{r}$, (ii) $\map{r}$ is irreflexive \iffi for each $a \in \dom$, $(a, a) \not\in \map{r}$, (iii) $\map{r}$ is asymmetric \iffi for each $a,b \in \dom$, if $(a, b) \in \map{r}$, then $(b, a) \not\in \map{r}$, and (iv) $\map{r}$ and $\map{s}$ are disjoint \iffi $\map{r} \cap \map{s} = \emptyset$.}

\begin{itemize}
\item $\inter$ \emph{satisfies} $r_{1} \comp \cdots \comp r_{n} \imp r$, written $\inter \models r_{1} \comp \cdots \comp r_{n} \imp r$, \ifandonlyif $\map{r_{1}} \comp \cdots \comp \map{r_{n}} \subseteq \map{r}$;

\item $\inter$ \emph{satisfies} $\refl{r}$, written $\inter \models \refl{r}$, \ifandonlyif $\map{r}$ is reflexive;

\item $\inter$ \emph{satisfies} $\irr{r}$, written $\inter \models \irr{r}$, \ifandonlyif $\map{r}$ is irreflexive;

\item $\inter$ \emph{satisfies} $\asy{r}$, written $\inter \models \asy{r}$, \ifandonlyif $\map{r}$ is asymmetric;

\item $\inter$ \emph{satisfies} $\disj{r,s}$, written $\inter \models \disj{r,s}$, \ifandonlyif $\map{r}$ and $\map{s}$ are disjoint.
\end{itemize}
%For more information on the $\sr$ family of DLs, see~\cite[\sect~2.4]{OrtSim12}.

\medskip

\underline{$\nom$} Including an $\nom$ in the name of a DL indicates that the set $\cnames$ of concept names includes \emph{nominals} of the form $\{a\}$, for each $a \in \inames$. We interpret nominals accordingly:
$\map{\{a\}} := \{\map{a}\}$.

%\item $\inter$ \emph{satisfies} $b : \{a\}$, written $\inter \models \b : {a\}$, \ifandonlyif $\map{\{a\}} = \{\map{a}\}$.

\medskip

\underline{$\ir$} Including an $\ir$ in the name of a DL indicates that the set $\rnames$ includes \emph{inverse roles} of the form $\inv{r}$, for each $r \in \rnames$. We interpret inverse roles accordingly:
$\map{\inv{r}} := \{(b,a)\ | \ (a,b) \in \map{r}\}$.

%\item $\inter$ \emph{satisfies} $\inv{r}$, written $\inter \models \inv{r}$, \ifandonlyif $\map{\{a\}} = \{\map{a}\}$.

\medskip

\underline{$\f$} An $\f$ in the name of a DL indicates that a TBox may include \emph{functionality axioms} of the form $\funct{r}$ for $r \in \rnames$.
$\inter$ \emph{satisfies} $\funct{r}$, written $\inter \models \funct{r}$, \ifandonlyif $\map{r}$ is functional.\footnote{$\map{r}$ is functional \iffi for all $a, b, c \in \dom$, if $(a,b), (a,c) \in \map{r}$, then $b = c$.}

\medskip

\underline{$\n$} The symbol $\n$ is included in the name of a DL when it includes \emph{unqualified number restrictions} of the form $(\leqslant n r. \top)$ or $(\geqslant n r. \top)$ with $r \in \rnames$ among its concepts. We interpret unqualified number restrictions as follows:\footnote{We use $\card{S}$ for a set $S$ to denote the \emph{cardinality} of the set.} $\map{(\leqslant n r . \top)} := \{a \in \dom \ | \ \card{\{b \ | \ (a,b) \in \map{r} \}} \leq n \}$ and $\map{(\geqslant n r . \top)} := \{a \in \dom \ | \ \card{\{b \ | \ (a,b) \in \map{r} \}} \geq n \}$.
%\begin{itemize}

%\item $\map{(\leqslant n r . \top)} = \{a \in \dom \ | \ \card{\{b \ | \ (a,b) \in \map{r} \}} \leq n \}$

%\item $\map{(\geqslant n r . \top)} = \{a \in \dom \ | \ \card{\{b \ | \ (a,b) \in \map{r} \}} \geq n \}$

%\end{itemize}

\medskip

\underline{$\q$} We use $\q$ to indicate that a DL includes \emph{qualified number restrictions} of the form $(\leqslant n r. P)$ or $(\geqslant n r. P)$ with $r \in \rnames$ among its concepts. We interpret qualified number restrictions accordingly: $\map{(\leqslant n r . P)} := \{a \in \dom \ | \ \card{\{b \ | \ (a,b) \in \map{r} \text{ and } b : P \}} \leq n \}$ and $\map{(\geqslant n r . P)} := \{a \in \dom \ | \ \card{\{b \ | \ (a,b) \in \map{r} \text{ and } b : P \}} \geq n \}$.

%\begin{itemize}

%\item $\map{(\leqslant n r . P)} = \{a \in \dom \ | \ \card{\{b \ | \ (a,b) \in \map{r} \text{ and } b : P \}} \leq n \}$

%\item $\map{(\geqslant n r . P)} = \{a \in \dom \ | \ \card{\{b \ | \ (a,b) \in \map{r} \text{ and } b : P \}} \geq n \}$

%\end{itemize}

\medskip

\underline{Other Extensions} We may also extend $\alc$ by permitting the inclusion of \emph{equality} or \emph{inequality} axioms of the form $a \eq b$ and $a \ineq b$ (\resp) in a TBox, by permitting \emph{negated role assertions} of the form $\neg r(a,b)$ in an ABox, by allowing for the \emph{universal role} $\univ$ to be included in $\rnames$ (interpreted $\map{\univ} := \dom \times \dom$), or by allowing the complex concept $\some \self$ for $r \in \rnames$ (interpreted $\map{(\some \self)} := \{a \ | \ (a,a) \in \map{r} \}$). The semantics of (in)equalities and negated role assertions %, and the interpretations of the latter two, 
 is as follows:
\begin{itemize}

\item $\inter$ \emph{satisfies} $a \eq b$, written $\inter \models a \eq b$, \ifandonlyif $\map{a} = \map{b}$;

\item $\inter$ \emph{satisfies} $a \ineq b$, written $\inter \models a \ineq b$, \ifandonlyif $\map{a} \neq \map{b}$;

\item $\inter$ \emph{satisfies} $\neg r(a,b)$, written $\inter \models \neg r(a,b)$, \ifandonlyif $(a,b) \not\in \map{r}$.

%\item $\map{\univ} := \dom \times \dom$;

%\item $\map{(\some \self)} := \{a \ | \ (a,a) \in \map{r} \}$.

\end{itemize}

%We let $\alc^{*}$ denote any expressive description logic extending $\alc$ with the properties mentioned above. 

%For the remainder of the paper, when discussing a particular family of DLs, we assume that $\cnames$ and $\rnames$ contain the relevant concept names and role names, respectively. 
% For more information and a survey of the various DLs described here, we refer the reader to the following papers~\cite{BaaHorLutSat17,OrtSim12,Rud11}.

\section{Sequent Systems}\label{sec:sequent-systems}

Our proof systems consist of inference rules that manipulate sequents of the form $\Lambda := \rcxtl, \cxtl \sar \cxtr, \rcxtr$, where $\rcxtl, \cxtl$ is referred to as the \emph{antecedent} and $\cxtr, \rcxtr$ is referred to as the \emph{consequent}. Note that $\cxtl$, $\cxtr$, $\rcxtl$, and $\rcxtr$ are taken to be (potentially empty) multisets of DL formulae. $\cxtl$ and $\cxtr$ are multisets of formulae of the form $a : P$, called \emph{internal formulae (IFs)}, where $a$ ranges over the set of individuals $\inames$, and $P$ is a complex concept generated via the following grammar in BNF:
$$
P ::=  C \ | \  \bot \ | \  \top \ | \  \neg P \ | \  P \dis P \ | \  P \con P \ | \  \some P \ | \  \all P \ | \  \{a\} \ | \  (\leqslant n r. P) \ | \  (\geqslant n r. P) \ | \ \some \self
$$
with $C \in \cnames$, $r \in \rnames$ (which is potentially an inverse role $\inv{s}$ or the universal role $\univ$), $a \in \inames$, and $n \in\mathbb{N}$. $\rcxtl$ and $\rcxtr$ consist of formulae generated via the following grammar in BNF, and are referred to as \emph{external formulae (EFs)}.
$$
\lrf ::= P \imp Q \ | \ r(a,b) \ | \ \neg r(a,b) \ | \ \rrel{r_{1}, \ldots, r_{n}} \ | \ r_{1} \comp \cdots \comp r_{n} \imp r \ | \ a \eq b \ | \ a \ineq b
$$
where $P$ and $Q$ are complex concepts, $a,b \in \inames$, $r_{1}, \ldots, r_{n}, r \in \rnames$ (and are potentially inverse roles $\inv{s}$ or the universal role $\univ$), and for each arity $n \in \mathbb{N}$, the relation name $\mathsf{Rel}$ ranges over a countable set of $n$-ary relation names. We note that transitivity axioms $\trans{r}$, reflexivity axioms $\refl{r}$, irreflexivity axioms $\irr{r}$, asymmetry axioms $\asy{r}$, disjointness axioms $\disj{r,s}$, and functionality axioms $\funct{r}$ are all instances of formulae of the form $\rrel{r_{1}, \ldots, r_{n}}$, which we refer to as \emph{role relational axioms (RRAs)}. We use $\lrff$, $\lrfg$, $\ldots$ to denote EFs defined by the grammar above. We distinguish EFs from IFs as EFs are those formulae which \emph{govern reasoning with complex concepts}, i.e. of reasoning with IFs. 

When supplying a calculus for a particular DL, we assume that the EFs and IFs occurring within sequents are restricted to those formulae allowed by the DL language under consideration. For example, for $\alc$, we omit the inclusion of nominals, (un)qualified number restrictions, and $\some \self$ from occurring in IFs since such concepts are not included in $\alc$'s language.

\subsection{The System $\gtalc$ and Descriptive Definitional Rules}

We now present our calculus $\gtalc$ for the DL $\alc$ as well as define extensions of the calculus with \emph{descriptive definitional rules (DDRs)}.\footnote{For a discussion of $\mathsf{G3}$-style calculi, along with the $\mathsf{G1}$ and $\mathsf{G2}$ variants, see~\cite[Section~80]{Kle52}.} DDRs introduce RRAs into either the antecedent or consequent of a sequent, and thus provide our calculus with the capacity to handle such formulae. We discuss DDRs in detail below, and mention the DDRs that introduce widely-used RRAs such as transitivity axioms and reflexivity axioms. %DDRs can be seen as restricted variants of \emph{geometric structural rules}, which have found applications in proof systems for modal and intuitionistic logics (e.g.~\cite{Neg05,Sim94}). 
The calculus $\gtalc$ is obtained by transforming the semantics of $\alc$ into inference rules (cf.~\cite{NegPla11,Sim94,Vig00}), and is displayed in \fig~\ref{fig:G3ALC}. Note that in the $\idr$ rule we stipulate that $F$ must be of the form $r(a,b)$ or $a \eq b$.
%The side condition $\dag$ stipulates that the rule in question can be applied only if the occurring individuals are eigenvariables.
 We refer to the \emph{principal formulae} of a rule as those formulae which are explicitly presented in the conclusion (e.g. $a : P \dis Q$ is the principal formula of $\disl$), and to the multisets $\rcxtl$, $\cxtl$, $\cxtr$, and $\rcxtr$ as \emph{contexts}. Furthermore, we note that proofs/derivations are constructed by successively applying inference rules to \emph{initial rules/sequents}, i.e. rules without premises (e.g. $\idc$, $\idr$, $\botl$, and $\topr$), and the \emph{height} of a proof is defined to be the longest sequence of sequents from the conclusion of the proof to an initial rule (cf.~\cite{NegPla11}).
 
 %we refer to the \emph{auxiliary formulae} as those formulae explicitly presented in the premise(s) that are used to derive the principal formulae in the conclusion.

\begin{figure}%[t] 
%\noindent\hrule

\begin{center}
\begin{tabular}{c c}
\AxiomC{}
\RightLabel{$\idc$}
\UnaryInfC{$\rcxtl, \cxtl, a : C \sar a : C, \cxtr, \rcxtr$}
\DisplayProof

&

\AxiomC{}
\RightLabel{$\idr$}
\UnaryInfC{$\rcxtl, \cxtl, \lrf \sar \lrf, \cxtr, \rcxtr$}
\DisplayProof
\end{tabular}
\end{center}

\begin{center}
\begin{tabular}{c c c}
\AxiomC{}
\RightLabel{$\botl$}
\UnaryInfC{$\rcxtl, \cxtl, a : \bot \sar \cxtr, \rcxtr$}
\DisplayProof

&

\AxiomC{$\rcxtl, \cxtl \sar a : \bot, \cxtr, \rcxtr$}
\RightLabel{$\botr$}
\UnaryInfC{$\rcxtl, \cxtl \sar \cxtr, \rcxtr$}
\DisplayProof

&

\AxiomC{$\rcxtl, \cxtl, a : \top \sar \cxtr, \rcxtr$}
\RightLabel{$\topl$}
\UnaryInfC{$\rcxtl, \cxtl \sar \cxtr, \rcxtr$}
\DisplayProof
\end{tabular}
\end{center}

\begin{center}
\begin{tabular}{c c c}

\AxiomC{}
\RightLabel{$\topr$}
\UnaryInfC{$\rcxtl, \cxtl \sar a : \top, \cxtr, \rcxtr$}
\DisplayProof

&

\AxiomC{$\rcxtl, \cxtl \sar a : P, \cxtr, \rcxtr$}
\RightLabel{$\negl$}
\UnaryInfC{$\rcxtl, \cxtl, a : \neg P \sar \cxtr, \rcxtr$}
\DisplayProof

&

\AxiomC{$\rcxtl, \cxtl, a : P \sar \cxtr, \rcxtr$}
\RightLabel{$\negr$}
\UnaryInfC{$\rcxtl, \cxtl \sar a : \neg P, \cxtr, \rcxtr$}
\DisplayProof
\end{tabular}
\end{center}

\begin{center}
\begin{tabular}{c c}
\AxiomC{$\rcxtl, \cxtl, a : P \sar \cxtr, \rcxtr$}
\AxiomC{$\rcxtl, \cxtl, a : Q \sar \cxtr, \rcxtr$}
\RightLabel{$\disl$}
\BinaryInfC{$\rcxtl, \cxtl, a : P \dis Q \sar \cxtr, \rcxtr$}
\DisplayProof

&

\AxiomC{$\rcxtl, \cxtl \sar a : P, a : Q, \cxtr, \rcxtr$}
\RightLabel{$\disr$}
\UnaryInfC{$\rcxtl, \cxtl \sar a : P \dis Q, \cxtr, \rcxtr$}
\DisplayProof
\end{tabular}
\end{center}

\begin{center}
\begin{tabular}{c c}
\AxiomC{$\rcxtl, \cxtl, a : P, a : Q \sar \cxtr, \rcxtr$}
\RightLabel{$\conl$}
\UnaryInfC{$\rcxtl, \cxtl, a : P \con Q \sar \cxtr, \rcxtr$}
\DisplayProof

&

\AxiomC{$\rcxtl, \cxtl \sar a : P, \cxtr, \rcxtr$}
\AxiomC{$\rcxtl, \cxtl \sar a : Q, \cxtr, \rcxtr$}
\RightLabel{$\conr$}
\BinaryInfC{$\rcxtl, \cxtl \sar a : P \con Q, \cxtr, \rcxtr$}
\DisplayProof
\end{tabular}
\end{center}

\begin{center}
\begin{tabular}{c c}
\AxiomC{$\rcxtl, P \imp Q, a : P, a : Q, \cxtl \sar \cxtr, \rcxtr$}
%\AxiomC{$\rcxtl, \cxtl, a : Q \sar \cxtr, \rcxtr$}
\RightLabel{$\impl$}
\UnaryInfC{$\rcxtl, P \imp Q, a : P, \cxtl \sar \cxtr, \rcxtr$}
\DisplayProof

&

\AxiomC{$\rcxtl, \cxtl, b : P \sar b : Q, \cxtr, \rcxtr$}
\RightLabel{$\impr^{\dag}$}
\UnaryInfC{$\rcxtl, \cxtl \sar P \imp Q, \cxtr, \rcxtr$}
\DisplayProof
\end{tabular}
\end{center}

\begin{center}
\begin{tabular}{c c}
\AxiomC{$\rcxtl, \cxtl, r(a,b), b : P \sar \cxtr, \rcxtr$}
\RightLabel{$\existsl^{\dag}$}
\UnaryInfC{$\rcxtl, \cxtl, a : \some P \sar \cxtr, \rcxtr$}
\DisplayProof

&

\AxiomC{$\rcxtl, \cxtl, r(a,b) \sar a : \some P, b : P, \cxtr, \rcxtr$}
\RightLabel{$\existsr$}
\UnaryInfC{$\rcxtl, \cxtl, r(a,b) \sar a : \some P, \cxtr, \rcxtr$}
\DisplayProof
\end{tabular}
\end{center}

\begin{center}
\begin{tabular}{c c}
\AxiomC{$\rcxtl, \cxtl, r(a,b), a : \all P, b : P \sar \cxtr, \rcxtr$}
\RightLabel{$\alll$}
\UnaryInfC{$\rcxtl, \cxtl, r(a,b), a : \all P \sar \cxtr, \rcxtr$}
\DisplayProof

&

\AxiomC{$\rcxtl, \cxtl, r(a,b) \sar b : P, \cxtr, \rcxtr$}
\RightLabel{$\allr^{\dag}$}
\UnaryInfC{$\rcxtl, \cxtl \sar a : \all P, \cxtr, \rcxtr$}
\DisplayProof
\end{tabular}
\end{center}

%\hrule
\caption{$\gtalc$. $\dag$ stipulates that the rule can be applied only if $b$ is an eigenvariable, i.e. $b$ does not occur in the conclusion of the rule.}
%The calculus $\gtalc$.}
\label{fig:G3ALC}
\end{figure}

DDRs are rules which are equivalent to, and obtained from, \emph{descriptive definitions}. Descriptive definitions define properties of, and relationships between, roles; i.e. they define the necessary and sufficient conditions for which an RRA obtains. For instance, the formula $ \trans{r} \leftrightarrow \forall a \forall b \forall c (r(a,b) \land r(b,c) \rightarrow r(a,c))$ defines the RRA $\trans{r}$ for the role $r$. %Descriptive definitions are defined as follows:

%Descriptive definitions are defined below, and we note that the definiens (i.e. the defining formula on the right of the bi-conditional) is a restricted version of a \emph{universal axiom}~\cite{NegPla11}. Such axioms have been used in proof theory to provide sound and complete calculi for diverse classes of modal and intuitionistic logics (cf.~\cite{NegPla11,Neg05}).

\begin{definition}[Descriptive Definition] A \emph{descriptive definition} is a formula of the form:
$$
\rrel{r_{1}, \ldots, r_{l}} \leftrightarrow \forall \vec{a} (F_{1} \land \cdots \land F_{n} \rightarrow G_{1} \lor \cdots \lor G_{k})
$$
such that each $F_{i}$ and $G_{j}$ is an EF of the form $r(a,b)$ or $a \eq b$, the individuals $\vec{a} := a_{1}, \ldots, a_{m}$ occur within $F_{1} \land \cdots \land F_{n}$ (which is $\top$ if the conjunction is empty) and $G_{1} \lor \cdots \lor G_{k}$ (which is $\bot$ if the disjunction is empty), and where the definiens (to the right of the bi-conditional) only makes reference to the roles $r_{1}$, $\ldots$, $r_{l}$ and/or equalities of the form $a \eq b$ (for $a$ and $b$ in $\vec{a}$). %(NB. An empty conjunction in the antecedent of the definiens is taken to be $\top$ and an empty disjunction in the consequent of the definiens is taken to be $\bot$.)
\end{definition}

Each descriptive definition of the above form can be transformed into a pair of left and right introduction rules (introducing the RRA $\rrel{r_{1}, \ldots, r_{l}}$) as shown below:
\begin{center}
\AxiomC{$\Big\{ \rcxtl, \rrel{r_{1}, \ldots, r_{l}}, \overline{F}, G_{j}, \cxtl \sar \cxtr, \rcxtr \ | \ 1 \leq j \leq k  \Big\}$}
\RightLabel{$\ddrl$}
\UnaryInfC{$\rcxtl, \rrel{r_{1}, \ldots, r_{l}}, \overline{F}, \cxtl \sar \cxtr, \rcxtr$}
\DisplayProof
\end{center}
\begin{center}
\AxiomC{$\rcxtl, \overline{F}, \cxtl \sar \cxtr, \overline{G}, \rcxtr$}
\RightLabel{$\ddrr^{\dag}$}
\UnaryInfC{$\rcxtl, \cxtl \sar \cxtr, \rrel{r_{1}, \ldots, r_{l}}, \rcxtr$}
\DisplayProof
\end{center}
We let $\overline{F} := F_{1}, \ldots, F_{n}$, $\overline{G} := G_{1},\ldots, G_{k}$ and the side condition $\dag$ states that $\ddrr$ is applicable only if the individuals $\vec{a}$ (the collection of all individuals occurring within $\overline{F}$ and $\overline{G}$) are eigenvariables. (NB. Eigenvariables are individuals that do not occur in the conclusion of a rule, i.e. they are fresh in the premise(s), which ensures the soundness of rule applications; for a discussion on eigenvariables, see~\cite{NegPla11}.) %We let $\gtalce$ denote the calculus obtained from extending $\gtalc$ with any finite number of DDR pairs $\{\ddrl,\ddrr\}$, and note that such extensions give calculi for extensions of $\alc$. 
 We let $\gtalce$ denote $\gtalc$ extended with any finite number of DDR pairs $\{\ddrl,\ddrr\}$, and note that such extensions give calculi for extensions of $\alc$. For example, if we aim to provide a calculus for the DL $\s$, then our calculus must be capable of reasoning with
%, and deducing information concerning, 
transitivity axioms %(as explained in \sect~\ref{subsec:extensions-of-ALC}), 
i.e. formulae of the form $\trans{r}$ with $r \in \rnames$. 
%As shown above, 
$\trans{r}$ can be defined by means of a descriptive definition, implying that we can obtain a calculus for the DL $\s$ by extending $\gtalc$ with the two rules shown below. (NB. The side condition $\dag$ states that $a$, $b$, and $c$ must be eigenvariables.)
 
\begin{center}
\AxiomC{$\rcxtl, \trans{r}, r(a,b), r(b,c), r(a,c), \cxtl \sar \cxtr, \rcxtr$}
\RightLabel{$\transl$}
\UnaryInfC{$\rcxtl, \trans{r}, r(a,b), r(b,c) , \cxtl \sar \cxtr, \rcxtr$}
\DisplayProof
\end{center}

\begin{center}
\AxiomC{$\rcxtl, r(a,b), r(b,c), \cxtl \sar \cxtr, r(a,c), \rcxtr$}
\RightLabel{$\transr^{\dag}$}
\UnaryInfC{$\rcxtl, \cxtl \sar \trans{r}, \cxtr, \rcxtr$}
\DisplayProof
\end{center}

%We note that
Some care must be taken when extending $\gtalc$ with DDRs. It is possible that certain properties of $\gtalc$, such as contraction hp-admissibility (see \thm~\ref{thm:admissible-rules}), are not immediately preserved in extensions of the calculus with DDRs. We apply a solution that is motivated by the work of~\cite{NegPla11}; namely, we can avoid such undesirable circumstances by ensuring that any extension of $\gtalc$ with DDRs adheres to the \emph{closure condition}. (NB. For the remainder of the paper, we assume that every extension of $\gtalc$ satisfies the closure condition.)

\begin{definition}[Closure Condition~\cite{NegPla11}] A calculus with DDRs satisfies the \emph{closure condition} \ifandonlyif for any DDR in the calculus which has a substitution instance containing duplicate principal formulae, the calculus also contains an instance of the rule with the duplicate formulae contracted.\footnote{An example illustrating the closure condition can be found in the appendix.}
\end{definition}

Since only a finite number of substitution instances produce duplicate principal formulae in a DDR, the closure condition will only add a finite number of rules in any extension of $\gtalc$.
%To provide the reader with more intuition concerning the closure condition, we include an example of a DDR and a rule which must also be included in an extension of $\gtalc$ via the condition:
%To illustrate the closure condition, consider the following example:

\begin{comment}
\begin{example} Suppose we include the following DDR in an extension of $\gtalc$:
\begin{center}
\AxiomC{$\rcxtl, \funct{r}, r(a,b), r(a,c), b \eq c, \cxtl \sar \cxtr, \rcxtr$}
\RightLabel{$\functl$}
\UnaryInfC{$\rcxtl, \funct{r}, r(a,b), r(a,c), \cxtl \sar \cxtr, \rcxtr$}
\DisplayProof
\end{center}
By substituting $a$ for $b$ and $c$, we obtain the instance with two copies of the principal formula $r(a,a)$, shown below left. By the closure condition, the rule shown below right is required to be in our calculus.
\begin{center}
\begin{tabular}{c c}
\AxiomC{$\rcxtl, \funct{r}, r(a,a), r(a,a), a \eq a, \cxtl \sar \cxtr, \rcxtr$}
%\RightLabel{$\functl'$}
\UnaryInfC{$\rcxtl, \funct{r}, r(a,a), r(a,a), \cxtl \sar \cxtr, \rcxtr$}
\DisplayProof

&

\AxiomC{$\rcxtl, \funct{r}, r(a,a), a \eq a, \cxtl \sar \cxtr, \rcxtr$}
%\RightLabel{$\functl''$}
\UnaryInfC{$\rcxtl, \funct{r}, r(a,a), \cxtl \sar \cxtr, \rcxtr$}
\DisplayProof
\end{tabular}
\end{center}
%By the closure condition, the following rule is required to be in our calculus as well:
%\begin{center}
%\AxiomC{$\rcxtl, \funct{r}, r(a,a), a \eq a, \cxtl \sar \cxtr, \rcxtr$}
%\RightLabel{$\functl''$}
%\UnaryInfC{$\rcxtl, \funct{r}, r(a,a), \cxtl \sar \cxtr, \rcxtr$}
%\DisplayProof
%\end{center}
\end{example}
\end{comment}

%We now define how sequents are interpreted, which will be used to define a notion of validity for sequents. We then leverage this notion to show soundness and completeness (\thm~\ref{thm:G3ALC*-sound-complete}).
 We now define a semantics for sequents as this will be used for soundness and completeness. % (\thm~\ref{thm:G3ALC*-sound-complete}).

\begin{definition}[Sequent Semantics] Let $\inter = (\dom,\map{\cdot})$ be an interpretation. A sequent $\Lambda := \rcxtl, \cxtl \sar \cxtr,\rcxtr$ is \emph{satisfied} in $\inter$, written $\inter \models \Lambda$, \ifandonlyif if $\inter$ satisfies all formulae in $\rcxtl, \cxtl$, then $\inter$ satisfies some formula in $\rcxtr,\cxtr$. A sequent $\Lambda$ is \emph{falsified} in $\inter$ \ifandonlyif $\inter \not\models \Lambda$, i.e. $\Lambda$ is not satisfied in $\inter$. A sequent $\Lambda$ is \emph{valid}, written $\models \Lambda$, \ifandonlyif it is satisfiable in every interpretation, and is invalid otherwise.
%a sequent $\Lambda$ is \emph{invalid} \ifandonlyif $\not\models \Lambda$, i.e. $\Lambda$ is not valid. %A sequent $\Lambda$ is \emph{valid}, written $\models \Lambda$, \ifandonlyif it is satisfiable in every interpretation $\inter$. Last, a sequent $\Lambda$ is \emph{invalid} \ifandonlyif $\not\models \Lambda$, i.e. $\Lambda$ is not valid.
\end{definition}

%\begin{theorem}[$\gtalc$ Soundness]
%If $\rcxtl, \cxtl \sar \cxtr, \rcxtr$ is derivable in $\gtalc$, then $\models \rcxtl, \cxtl \sar \cxtr, \rcxtr$.
%\end{theorem}

%\begin{proof} By induction on the height of the derivation of $\rcxtl, \cxtl \sar \cxtr, \rcxtr$.
%\end{proof}

%\begin{theorem}[$\gtalce$ Soundness and Completeness]\label{thm:G3ALC-sound-complete}
%$\gtalc$ is sound and complete, i.e. 
%$\rcxtl, \cxtl \sar \cxtr, \rcxtr$ is derivable in $\gtalce$ \ifandonlyif $\models \rcxtl, \cxtl \sar \cxtr, \rcxtr$.
%\end{theorem}

%\spnewtheorem*{proof}{Proof}{\itshape}{\rmfamily}
%Follows from \thm~\ref{thm:G3ALC*-sound-complete} of the following section (\sect~\ref{subsect:extensions-G3ALC}).

%\begin{proof} Follows from \thm~\ref{thm:G3ALC*-sound-complete} of the following section (\sect~\ref{subsect:extensions-G3ALC}).
%\qed
%\end{proof}

%Soundness (the forward direction) is shown by induction on the height of the derivation of $\rcxtl, \cxtl \sar \cxtr, \rcxtr$. Completeness (the backward direction) is shown by a method initially due to Kripke~\cite{Kri59}, and adapted to calculi in~\cite{Neg09}. We attempt to construct/find a derivation of $\rcxtl, \cxtl \sar \cxtr, \rcxtr$, and show that if a derivation fails to be found, then an interpretation $\inter$ can be constructed from the failed proof search such that $\inter \not\models \rcxtl, \cxtl \sar \cxtr, \rcxtr$. It then follows that if a sequent is not derivable, it is not valid, implying completeness. Details can be found in the appendix.

\subsection{Rules for Extensions of $\alc$}\label{subsect:extensions-G3ALC}

We discuss extensions of $\gtalce$ with rules for deriving new concept assertions (e.g. unqualified number restrictions and nominals) and EFs (e.g. equalities and RIAs). We introduce these additional rules in the same manner as we introduced extensions of $\alc$ in \sect~\ref{subsec:extensions-of-ALC}.

\medskip

%\subsubsection{$\s$} 
\underline{$\s$} If the language of our DL includes role compositions, then the rules $\compl$ and $\compr$ (shown below) should be included in the corresponding  calculus to %ensure that reasoning with role compositions can be performed. 
 allow reasoning with role compositions. (NB. $s$ is permitted to be a chain $r_{1} \circ \cdots \circ r_{n}$ of role compositions.)
 %Recall that we can use either axioms of the form $r \comp r \imp r$ or $\trans{r}$ to indicate that a role $r$ is transitive. Therefore, there are two separate sets of rules which can be included in a  calculus to allow reasoning with transitive roles.
 Since we can use axioms of the form $r \comp r \imp r$ or $\trans{r}$ to indicate that a role $r$ is transitive, there are two distinct sets of rules which can be included in a calculus to allow reasoning with transitive roles.

First, if our DL allows for axioms of the form $r \comp r \imp r$, then the composition rules, and restricted versions of the $\crial$ and $\criar$ rules (introduced in the $\sr$ subsection below) that only allow principal formulae of the form $r \comp r \imp r$, should be included in the corresponding calculus. (NB. The side condition $\dag$ on the $\compl$ rule stipulates that $b$ is an eigenvariable.)

\begin{center}
\AxiomC{$\rcxtl, r(a,b), s(b,c), \cxtl \sar \cxtr, \rcxtr$}
\RightLabel{$\compl^{\dag}$}
\UnaryInfC{$\rcxtl, (r \comp s)(a,c), \cxtl \sar \cxtr, \rcxtr$}
\DisplayProof
\end{center}

\begin{center}
\AxiomC{$\rcxtl, \cxtl \sar \cxtr, (r \comp s)(a,c), r(a,b), \rcxtr$}
\AxiomC{$\rcxtl, \cxtl \sar \cxtr, (r \comp s)(a,c), s(b,c), \rcxtr$}
\RightLabel{$\compr$}
\BinaryInfC{$\rcxtl, \cxtl \sar \cxtr, (r \comp s)(a,c), \rcxtr$}
\DisplayProof
\end{center}

Second, if we make use of transitivity axioms of the form $\trans{r}$ in our DL, then the DDRs $\transl$ and $\transr$, introduced in the previous section, should be included in our calculus to ensure sound and complete reasoning with such formulae.

\medskip

%\subsubsection{$\h$} 
\underline{$\h$} If we wish to enable reasoning with RIAs of the form $r \imp s$ (e.g. as in $\alc\h$), then one should add restricted versions of the $\crial$ and $\criar$ rules (introduced in the $\sr$ subsection below) where $n = 1$, to ensure sound and complete reasoning with RIAs.

\medskip

%\subsubsection{$\sr$} 
\underline{$\sr$} To enable reasoning with CRIAs, the composition rules $\compl$ and $\compr$ should be included along with the following $\crial$ and $\criar$ rules. %\footnote{As mentioned in a previous footnote, syntactic conditions are typically imposed on the form of CRIAs to ensure the decidability of the resulting DL (e.g., see~\cite{BaaHorLutSat17,HorSat04}). To design a sequent calculus for a DL in the $\sr$ family then, where such conditions are imposed on CRIAs, we only use CRIAs in our sequents which satisfy the conditions.} 
(NB. The side condition $\dag$ on the $\criar$ rule states that $a$ and $b$ must be eigenvariables. For readability, let $F$ denote $r_{1} \comp \cdots \comp r_{n} \imp r$.)

\begin{center}
\AxiomC{$\rcxtl,F, \cxtl \sar \cxtr, (r_{1} \comp \cdots \comp r_{n})(a,b), \rcxtr$}
\AxiomC{$\rcxtl, r(a,b), F, \cxtl \sar \cxtr, \rcxtr$}
\RightLabel{$\crial$}
\BinaryInfC{$\rcxtl, F, \cxtl \sar \cxtr, \rcxtr$}
\DisplayProof
\end{center}

\begin{center}
\AxiomC{$\rcxtl, (r_{1} \comp \cdots \comp r_{n})(a,b), \cxtl \sar  \cxtr, r(a,b), \rcxtr$}
\RightLabel{$\criar^{\dag}$}
\UnaryInfC{$\rcxtl, \cxtl \sar \cxtr,F, \rcxtr$}
\DisplayProof
\end{center}

The (ir)reflexivity, asymmetry, and disjointness axioms can all be defined by means of descriptive definitions: $\refl{r} \leftrightarrow \forall a (\top \rightarrow r(a,a))$, $\asy{r} \leftrightarrow \forall a \forall b (r(a,b) \land r(b,a) \rightarrow \bot)$, $\irr{r} \leftrightarrow \forall a (r(a,a) \rightarrow \bot)$, and $\disj{r,s} \leftrightarrow \forall a \forall b (r(a,b) \land s(a,b) \rightarrow \bot)$. Thus, extending $\gtalce$ with the corresponding DDRs provides our calculus with the capacity to reason with such axioms. All such DDRs can be obtained from the $\ddrl$ and $\ddrr$ %rules by means of the scheme previously given.
 rule schemata.

\medskip

%\subsubsection{$\nom$} 

\underline{$\nom$} To enable reasoning with nominals, one should include the following rules along with the equality rules of the final subsection below.

\begin{center}
\begin{tabular}{c c}
\AxiomC{$\rcxtl, a \eq b, a : \{b\}, \cxtl \sar \cxtr, \rcxtr$}
\RightLabel{$\noml$}
\UnaryInfC{$\rcxtl, a : \{b\}, \cxtl \sar \cxtr, \rcxtr$}
\DisplayProof

&

\AxiomC{$\rcxtl, \cxtl \sar \cxtr, a : \{b\}, a \eq b, \rcxtr$}
\RightLabel{$\nomr$}
\UnaryInfC{$\rcxtl, \cxtl \sar \cxtr, a : \{b\}, \rcxtr$}
\DisplayProof
\end{tabular}
\end{center}

\begin{center}
\begin{tabular}{c c}
\AxiomC{$\rcxtl, b : \{b\}, \cxtl \sar \cxtr, \rcxtr$}
\RightLabel{$\nomlii$}
\UnaryInfC{$\rcxtl, \cxtl \sar \cxtr, \rcxtr$}
\DisplayProof

&

\AxiomC{ }
\RightLabel{$\nomrii$}
\UnaryInfC{$\rcxtl, \cxtl \sar \cxtr, b : \{b\}, \rcxtr$}
\DisplayProof
\end{tabular}
\end{center}

\medskip

%\subsubsection{$\ir$} 

\underline{$\ir$} To add support for reasoning with inverse roles, one should not only allow inverse roles to appear in the relevant rules of the calculus (e.g. $\idr$, $\existsl$, and $\allr$), but should also include the following two rules that encode the fact that the roles $r$ and $\inv{r}$ are inverses.

\begin{center}
%\resizebox{\columnwidth}{!}
{
\begin{tabular}{c c}
\AxiomC{$\rcxtl, r(a,b), \inv{r}(b,a), \cxtl \sar \cxtr, \rcxtr$}
\RightLabel{$\invi$}
\UnaryInfC{$\rcxtl, r(a,b), \cxtl\sar \cxtr, \rcxtr$}
\DisplayProof

&

\AxiomC{$\rcxtl, \inv{r}(a,b), r(b,a), \cxtl \sar \cxtr, \rcxtr$}
\RightLabel{$\invii$}
\UnaryInfC{$\rcxtl, \inv{r}(a,b), \cxtl \sar \cxtr, \rcxtr$}
\DisplayProof
\end{tabular}
}
\end{center}

\begin{center}
%\resizebox{\columnwidth}{!}
{
\begin{tabular}{c c}
\AxiomC{$\rcxtl, \cxtl \sar \cxtr, r(a,b), \inv{r}(b,a), \rcxtr$}
\RightLabel{$\inviii$}
\UnaryInfC{$\rcxtl, \cxtl \sar \cxtr, r(a,b), \rcxtr$}
\DisplayProof

&

\AxiomC{$\rcxtl, \cxtl \sar \cxtr, \inv{r}(a,b), r(b,a), \rcxtr$}
\RightLabel{$\inviv$}
\UnaryInfC{$\rcxtl, \cxtl \sar \cxtr, \inv{r}(a,b), \rcxtr$}
\DisplayProof
\end{tabular}
}
\end{center}

\medskip

%\subsubsection{$\f$} 

\underline{$\f$} Functionality axioms of the form $\funct{r}$ can be defined by means of descriptive definitions; e.g. 
%\begin{center}
$\funct{r} \leftrightarrow \forall a \forall b \forall c (r(a,b) \land r(a,c) \rightarrow b \eq c)$. 
%\end{center}
 We can make use of the $\ddrl$ and $\ddrr$ rule schemata to define DDRs for $\funct{r}$. Hence, a calculus can be enabled to reason about functionality axioms by including the equality rules (introduced in final subsection below) along with the pair of DDRs obtained from the above descriptive defintion. %(NB. The side condition $\dag$ stipulates that $b$ and $c$ must be eigenvariables.)

%\begin{flushleft}
%\AxiomC{$\rcxtl, \funct{r}, r(a,b), r(a,c), b \eq c, \cxtl \sar \cxtr, \rcxtr$}
%\RightLabel{$\functl$}
%\UnaryInfC{$\rcxtl, \funct{r}, r(a,b), r(a,c), \cxtl \sar \cxtr, \rcxtr$}
%\DisplayProof
%\end{flushleft}
%\begin{flushright}
%\AxiomC{$\rcxtl, r(a,b), r(a,c), \cxtl \sar \cxtr, b \eq c, \rcxtr$}
%\RightLabel{$\functr^{\dag}$}
%\UnaryInfC{$\rcxtl, \cxtl \sar \cxtr, \funct{r}, \rcxtr$}
%\DisplayProof
%\end{flushright}

\medskip

%\subsubsection{$\n$}
\underline{$\n$} To allow reasoning with unqualified number restrictions, one makes use of versions of the $\ltnl$, $\ltnr$, $\gtnl$, and $\gtnr$ rules (shown in the next subsection $\q$) where the first set of premises is omitted, and where the $b_{i} : P$ formulae are omitted from the remaining premises. We refer to each of these versions as $\ltunl$, $\ltunr$, $\gtunl$, and $\gtunr$, respectively. %Also, the equality rules of the final subsection below must be included. 
 Additionally, the equality rules of the final subsection below should be included to ensure proper reasoning with equalities.

\medskip

%\subsubsection{$\q$} 
\underline{$\q$} To enable a calculus to derive theorems concerning qualified number restrictions, we add the following four rules along with the equality rules of the final subsection below. (NB. In the $\ltnr$ rule, $\dag_{1}$ states that $b_{0}, \ldots, b_{n}$ must be eigenvariables and $\rcxtr' := \{b_{i} \eq b_{j} \ | \ 0 \leq i < j \leq n \}$, and in the $\gtnl$ rule, $\dag_{2}$ states that $b_{1}, \ldots, b_{n}$ must be eigenvariables and $\rcxtr' := \{b_{i} \eq b_{j} \ | \ 1 \leq i < j \leq n \}$.)

\begin{center}
%\resizebox{\columnwidth}{!}
{
\AxiomC{$\Big \{\rcxtl, r(a,b_{0}), \ldots, r(a,b_{n}), \cxtl, a : (\leqslant n r . P) \sar b_{i} : P, \cxtr, \rcxtr \ | \ 0 \leq i \leq n \Big \} \cup$}
\noLine
\UnaryInfC{$\Big \{\rcxtl, b_{i} \eq b_{j}, r(a,b_{0}), \ldots, r(a,b_{n}), \cxtl, a : (\leqslant n r . P) \sar \cxtr, \rcxtr \ | \ 0 \leq i < j \leq n \Big \}$}
\RightLabel{$\ltnl$}
\UnaryInfC{$\rcxtl, r(a,b_{0}), \ldots, r(a,b_{n}), \cxtl, a : (\leqslant n r . P) \sar \cxtr, \rcxtr$}
\DisplayProof
}
\end{center}

\begin{center}
\AxiomC{$\rcxtl, r(a,b_{0}), \ldots, r(a,b_{n}), \cxtl, b_{0} : P, \ldots, b_{n} : P \sar \cxtr, \rcxtr', \rcxtr$}
\RightLabel{$\ltnr^{\dag_{1}}$}
\UnaryInfC{$\rcxtl, \cxtl \sar a : (\leqslant n r . P), \cxtr, \rcxtr$}
\DisplayProof
\end{center}

\begin{center}
\AxiomC{$\rcxtl, r(a,b_{1}), \ldots, r(a,b_{n}), \cxtl, b_{1} : P, \ldots, b_{n} : P \sar \cxtr, \rcxtr', \rcxtr$}
\RightLabel{$\gtnl^{\dag_{2}}$}
\UnaryInfC{$\rcxtl, \cxtl, a : (\geqslant n r . P) \sar \cxtr, \rcxtr$}
\DisplayProof
\end{center}

\begin{center}
%\resizebox{\columnwidth}{!}
{
\AxiomC{$\Big\{ \rcxtl, r(a,b_{1}), \ldots, r(a,b_{n}), \cxtl \sar b_{i} : P, a : (\geqslant n r . P), \cxtr, \rcxtr \ |\ 1 \leq i \leq n \Big\} \cup$}
\noLine
\UnaryInfC{$\Big \{\rcxtl, b_{i} \eq b_{j}, r(a,b_{1}), \ldots, r(a,b_{n}), \cxtl, a : (\leqslant n r . P) \sar \cxtr, \rcxtr \ | \ 0 \leq i < j \leq n \Big \}$}
\RightLabel{$\gtnr$}
\UnaryInfC{$\rcxtl, r(a,b_{1}), \ldots, r(a,b_{n}), \cxtl \sar a : (\geqslant n r . P), \cxtr, \rcxtr$}
\DisplayProof
}
\end{center}

\medskip
%\subsubsection{Other Extensions} 
\underline{Other Extensions} To enable reasoning with equalities, we include $\eql$, $\eqr$, $\eqrepli$, $\eqreplii$ and $\eqeucl$; to enable reasoning with inequalities, we add the $\ineql$ and $\ineqr$ rules along with the previous five. To enable reasoning with negated role assertions we include $\rnegl$ and $\rnegr$ in our calculus; to ensure theorems can be derived concerning the universal role $\univ$, we allow the role to be used in the relevant rules of our calculus (e.g. $\idr$, $\existsl$, and $\allr$) and also include the $\univl$ and $\univr$ rules shown below. Last, we include the $\selfl$ and $\selfr$ rules if we want our calculus to support complex concepts of the form $\some \self$. (NB. In the $\eqrepli$ and $\eqreplii$ rules, $[a/b]$ denotes a substitution of $b$ for $a$ in the relevant formula.)

\begin{center}
\begin{tabular}{c c c}
\AxiomC{$\rcxtl, a \eq a, \cxtl \sar \cxtr, \rcxtr$}
\RightLabel{$\eql$}
\UnaryInfC{$\rcxtl, \cxtl \sar \cxtr, \rcxtr$}
\DisplayProof

&

\AxiomC{ }
\RightLabel{$\eqr$}
\UnaryInfC{$\rcxtl, \cxtl \sar \cxtr, a \eq a, \rcxtr$}
\DisplayProof

&

\AxiomC{$\rcxtl, \cxtl \sar \cxtr, r(a,b), \rcxtr$}
\RightLabel{$\rnegl$}
\UnaryInfC{$\rcxtl, \neg r(a,b), \cxtl \sar \cxtr, \rcxtr$}
\DisplayProof
\end{tabular}
\end{center}

\begin{center}
\begin{tabular}{c c}
\AxiomC{$\rcxtl, a \eq b, \cxtl, a : P, b : P \sar \cxtr, \rcxtr$}
\RightLabel{$\eqrepli$}
\UnaryInfC{$\rcxtl, a \eq b, \cxtl, a : P \sar \cxtr, \rcxtr$}
\DisplayProof

&

\AxiomC{$\rcxtl, \cxtl \sar \cxtr, r(a,a), \rcxtr$}
\RightLabel{$\selfr$}
\UnaryInfC{$\rcxtl, \cxtl \sar a : \some \self, \cxtr, \rcxtr$}
\DisplayProof
\end{tabular}
\end{center}

%\begin{center}
%\begin{tabular}{c c}
%\AxiomC{$\rcxtl, \cxtl \sar \cxtr, a \ineq b, \rcxtr$}
%\RightLabel{$\eqnegl$}
%\UnaryInfC{$\rcxtl, a \eq b, \cxtl \sar \cxtr, \rcxtr$}
%\DisplayProof

%&

%\AxiomC{$\rcxtl, a \eq b, \lrf, \lrf[a/b], \cxtl \sar \cxtr, \rcxtr$}
%\RightLabel{$\eqreplii$}
%\UnaryInfC{$\rcxtl, a \eq b, \lrf, \cxtl \sar \cxtr, \rcxtr$}
%\DisplayProof
%\end{tabular}
%\end{center}

\begin{center}
\begin{tabular}{c c}
\AxiomC{$\rcxtl, a \eq b, \lrf, \lrf[a/b], \cxtl \sar \cxtr, \rcxtr$}
\RightLabel{$\eqreplii$}
\UnaryInfC{$\rcxtl, a \eq b, \lrf, \cxtl \sar \cxtr, \rcxtr$}
\DisplayProof

& 

\AxiomC{$\rcxtl, a \eq b, a \eq c, b \eq c, \cxtl \sar \cxtr, \rcxtr$}
\RightLabel{$\eqeucl$}
\UnaryInfC{$\rcxtl, a \eq b, a \eq c, \cxtl \sar \cxtr, \rcxtr$}
\DisplayProof
\end{tabular}
\end{center}

\begin{center}
\begin{tabular}{c c c}
\AxiomC{$\rcxtl, \cxtl \sar \cxtr, a \eq b, \rcxtr$}
\RightLabel{$\ineql$}
\UnaryInfC{$\rcxtl, a \ineq b,\cxtl \sar \cxtr, \rcxtr$}
\DisplayProof

&

\AxiomC{$\rcxtl, r(a,b), \cxtl \sar \cxtr, \rcxtr$}
\RightLabel{$\rnegr$}
\UnaryInfC{$\rcxtl, \cxtl \sar \cxtr, \neg r(a,b), \rcxtr$}
\DisplayProof

&

\AxiomC{$\rcxtl, \univ(a,b), \cxtl \sar \cxtr, \rcxtr$}
\RightLabel{$\univl$}
\UnaryInfC{$\rcxtl, \cxtl \sar \cxtr, \rcxtr$}
\DisplayProof
\end{tabular}
\end{center}

\begin{center}
\begin{tabular}{c c c}
\AxiomC{ }
\RightLabel{$\univr$}
\UnaryInfC{$\rcxtl, \cxtl \sar \cxtr, \univ(a,b), \rcxtr$}
\DisplayProof

&

\AxiomC{$\rcxtl, r(a,a), \cxtl \sar \cxtr, \rcxtr$}
\RightLabel{$\selfl$}
\UnaryInfC{$\rcxtl, \cxtl, a : \some \self \sar \cxtr, \rcxtr$}
\DisplayProof

&

\AxiomC{$\rcxtl, a \eq b, \cxtl \sar \cxtr, \rcxtr$}
\RightLabel{$\ineqr$}
\UnaryInfC{$\rcxtl, \cxtl \sar \cxtr, a \ineq b, \rcxtr$}
\DisplayProof
\end{tabular}
\end{center}

We use $\gtalcee$ to denote an extension of a calculus $\gtalce$ with sets of the above rules. We allow for extensions with the sets shown below, and note that the addition of one set of rules may necessitate the addition of another set of rules, as explained above. %; e.g. including the rules $\ineql$ and $\ineqr$ requires the inclusion of the equality rules. 
 Extensions with rules for RRAs (such as $\trans{r}$ and $\asy{r}$) are taken into account as extensions with DDRs:
 
\medskip
 
$\{\compl, \compr\}$; $\{\crial, \criar\}$; $\{\selfl, \selfr\}$; $\{\ineql, \ineqr\}$; $\{\rnegl, \rnegr\}$;

%\medskip

$\{\univl, \univr\}$; $\{\invi, \invii,\inviii,\inviv\}$;

%\medskip

$\{\nomli,\nomlii,\nomr,\nomrii\}$; $\{\ltnl, \ltnr, \gtnl, \gtnr\}$; 

%\medskip

$\{\ltunl, \ltunr, \gtunl, \gtunr\}$; $\{\eql, \eqr, \eqeucl, \eqrepli, \eqreplii\}$.

\medskip

%\begin{center}
%\begin{minipage}{.3\textwidth}
%\begin{itemize}

%\item $\{\compl, \compr\}$

%\item $\{\crial, \criar\}$

%\item $\{\selfl, \selfr\}$

%\item $\{\ineql, \ineqr\}$

%\item $\{\rnegl, \rnegr\}$

%\end{itemize}
%\end{minipage}
%\begin{minipage}{.65\textwidth}
%\begin{itemize}

%\item $\{\univl, \univr\}$

%\item $\{\nomli,\nomlii,\nomr,\nomrii\}$

%\item $\{\invi, \invii,\inviii,\inviv\}$

%\item $\{\ltnl, \ltnr, \gtnl, \gtnr\}$

%\item $\{\ltunl, \ltunr, \gtunl, \gtunr\}$

%\item $\{\eql, \eqr, \eqeucl, \eqrepli, \eqreplii\}$

%\end{itemize}
%\end{minipage}
%\end{center}
%We always assume that each calculus $\gtalcee$ satisfies the closure condition. 
%Last, we note that each of the above rules is sound:

\begin{theorem}%[$\gtalcee$ Soundness and Completeness]
\label{thm:G3ALC*-sound-complete}
%$\gtalcee$ is sound and complete, i.e. 
$\rcxtl, \cxtl \sar \cxtr,\rcxtr$ is derivable in $\gtalcee$ \ifandonlyif \ $\models \rcxtl, \cxtl \sar \cxtr,\rcxtr$.
\end{theorem}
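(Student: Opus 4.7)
The plan is to establish the two directions separately by standard techniques adapted to the generality of $\gtalcee$. For the soundness direction (derivability implies validity), I would proceed by induction on the height of a derivation in $\gtalcee$. The base cases handle $\idc$, $\idr$, $\botl$, $\topr$, $\eqr$, $\nomrii$, and $\univr$, each of which is easily seen to be valid by unfolding the sequent semantics. The inductive step requires a case distinction on the last rule applied. For each rule, I would assume that every premise is valid in every interpretation and show that the conclusion must be as well. The logical rules for $\neg$, $\con$, $\dis$, $\some$, $\all$, and $\imp$ go through by direct appeal to the semantic clauses of \dfn~1, and the rules involving eigenvariables (such as $\impr$, $\existsl$, $\allr$, $\compl$, $\criar$, $\ltnr$, $\gtnl$, and the generic $\ddrr$ rule) use the standard trick that the fresh individual can be interpreted as any element of the domain chosen to witness a universally quantified claim. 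The DDR rules deserve particular care: soundness of $\ddrl$ and $\ddrr$ follows by unfolding the descriptive definition that induced them, and the closure-condition variants are sound as contractions of sound instances.

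For the completeness direction, the plan is to argue the contrapositive via countermodel extraction from a failed backwards proof search, in the style of~\cite{NegPla11}. Concretely, given an underivable sequent $\Lambda$, I would define a fair, systematic saturation procedure that, at each stage, applies every applicable (invertible) rule of $\gtalcee$ backwards in some well-chosen order, introducing fresh individuals whenever an eigenvariable-introducing rule is applied. Because $\Lambda$ is not derivable, at least one branch of this (possibly infinite) tree never closes; by König's lemma, pick one such open branch $\branch$ and let $\branch^{\ast}$ denote the union of all sequents along it.

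From $\branch^{\ast}$, I would define a canonical interpretation $\inter_{\branch}$ whose domain is the set of individuals appearing in $\branch^{\ast}$, quotiented by the equivalence relation induced by the $\eq$-assertions in the antecedents (this quotient is well-behaved because of the $\eql$, $\eqr$, $\eqeucl$, $\eqrepli$, and $\eqreplii$ rules); set $\map{C} := \{[a] \mid a : C \in \cxtl \text{ along } \branch\}$, $\map{r} := \{([a],[b]) \mid r(a,b) \in \rcxtl \text{ along } \branch\}$, and interpret nominals, inverses, and $\univ$ in the forced way. A simultaneous induction on the structure of complex concepts and EFs (a ``truth lemma'') then shows that every formula in the antecedents of $\branch^{\ast}$ is satisfied in $\inter_{\branch}$ and every formula in the consequents is falsified; saturation under $\disl$, $\existsl$, $\allr$, etc., is exactly what makes each inductive step work.

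The main obstacle will be verifying that $\inter_{\branch}$ actually satisfies the RRAs mentioned in $\rcxtl$ (transitivity, asymmetry, functionality, disjointness, reflexivity, irreflexivity, and any user-supplied $\rrel{\cdot}$-axioms), together with the more delicate rules for number restrictions and CRIAs. This is precisely where the structure of DDRs pays off: because every $\ddrl$ rule faithfully encodes its descriptive definition, saturation of $\branch^{\ast}$ under $\ddrl$ forces $\inter_{\branch}$ to satisfy the corresponding first-order property, and the closure condition prevents pathological duplications from blocking the saturation. The $\ltnl$/$\ltnr$/$\gtnl$/$\gtnr$ rules pose the subtlest case, since they interact with the equality quotient; here I would show that the enumeration of all $b_{i} \eq b_{j}$ premises in $\ltnl$ and $\gtnr$ matches exactly the cardinality condition once we pass to the $\eq$-quotient, and that the eigenvariables generated by $\ltnr$ and $\gtnl$ supply the required distinct witnesses. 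Finally, since $\branch$ is open and $\Lambda$ sits at its root, $\inter_{\branch}$ falsifies $\Lambda$, contradicting $\models \Lambda$.
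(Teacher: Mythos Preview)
Your proposal is correct and follows essentially the same approach as the paper: soundness by induction on derivation height, and completeness by contraposition via a fair backwards saturation, K\H{o}nig's lemma to extract an infinite open branch, a canonical interpretation built on the individuals of that branch quotiented by the antecedent equalities, and a truth lemma proved by simultaneous induction on formula weight. Your treatment of the DDRs, the equality quotient, and the number-restriction rules matches the paper's handling of these cases.
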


%\rcxtl, \cxtl \sar \cxtr, \rcxtr$ is derivable in $\gtalce$ \ifandonlyif $\models \rcxtl, \cxtl \sar \cxtr, \rcxtr
\begin{proof}
Soundness (the forward direction) is shown by induction on the height of the given derivation. Completeness (the backward direction) is shown by a method due to Kripke~\cite{Kri59}. We assume $\rcxtl, \cxtl \sar \cxtr, \rcxtr$ is not derivable, and show that a counter-model can be extracted from failed proof search; thus, if a sequent is not derivable, it is not valid, implying completeness. %Details can be found in the appendix.
%\qed
%We attempt to construct/find a derivation of $\rcxtl, \cxtl \sar \cxtr, \rcxtr$, and show that if a derivation fails to be found, then an interpretation $\inter$ can be constructed from the failed proof search such that $\inter \not\models \rcxtl, \cxtl \sar \cxtr, \rcxtr$. It then follows that if a sequent is not derivable, it is not valid, implying completeness. Details can be found in the appendix.
\end{proof}

We additionally show that our calculi possess desirable proof-theoretic properties. Before stating our theorem concerning which properties are possessed, we recall the definition of each property for the reader. %\footnote{These concepts and definitions can be found in standard books on proof theory; e.g.~\cite{NegPla11,Tak13}.} 
A rule is defined to be (\emph{height-preserving}) \emph{admissible} in a calculus \ifandonlyif if the premise(s) of the rule is (are) derivable in the calculus (with a certain height), then the conclusion is derivable in the calculus (with a height less than or equal to the height of the premise(s)). Let us define the \emph{inverse} of $(R)$, written $(\hat{R})$, to be the rule obtained by switching the conclusion and the premise(s) of $(R)$. A rule $(R)$ is defined to be (\emph{height-preserving}) \emph{invertible} in a calculus \ifandonlyif $(\hat{R})$ is (height-presevering) admissible. That is, if there exists a derivation for the conclusion, its premises can be derived as well \cite{NegPla11}. As is common in the literature, we usually write \emph{hp-admissible} and \emph{hp-invertible} instead of \emph{height-preserving admissible} and \emph{height-preserving invertible}, and we remark that such properties are important as they can be leveraged to prove decidability of logics~\cite{Kle52}, to permit automated counter-model extraction~\cite{LyoBer19}, or to prove cut-elimination~\cite{NegPla11}, among other applications. Note that in $\sub$, applying a substitution $[b/a]$ to a multiset is defined in the usual way as the replacement of all occurrences of $a$ by $b$ in the multiset. Last, we note that special (hp-)admissible structural rules are shown in \fig~\ref{fig:structural-rules}.

\begin{figure}%[t]
%\noindent\hrule

\begin{center}
\begin{tabular}{c c c}
\AxiomC{$\rcxtl, \cxtl \sar \cxtr, \rcxtr$}
\RightLabel{$\wkl$}
\UnaryInfC{$\rcxtl, \rcxtl', \cxtl, \cxtl' \sar \cxtr, \rcxtr$}
\DisplayProof

&

\AxiomC{$\rcxtl, \cxtl \sar \cxtr, \rcxtr$}
\RightLabel{$\wkr$}
\UnaryInfC{$\rcxtl, \cxtl \sar \cxtr, \cxtr', \rcxtr, \rcxtr'$}
\DisplayProof

&

\AxiomC{$\rcxtl, \rcxtl', \rcxtl', \cxtl, \cxtl', \cxtl' \sar \cxtr, \rcxtr$}
\RightLabel{$\ctrl$}
\UnaryInfC{$\rcxtl, \rcxtl', \cxtl, \cxtl' \sar \cxtr, \rcxtr$}
\DisplayProof
\end{tabular}
\end{center}

\begin{center}
\begin{tabular}{c c}
\AxiomC{$\rcxtl, \cxtl \sar \cxtr', \cxtr', \cxtr, \rcxtr', \rcxtr', \rcxtr$}
\RightLabel{$\ctrr$}
\UnaryInfC{$\rcxtl, \cxtl \sar \cxtr', \cxtr, \rcxtr', \rcxtr$}
\DisplayProof

&

\AxiomC{$\rcxtl, \cxtl \sar \cxtr, \rcxtr$}
\RightLabel{$\sub$}
\UnaryInfC{$(\rcxtl, \cxtl)[b/a] \sar (\cxtr, \rcxtr)[b/a]$}
\DisplayProof
\end{tabular}
\end{center}

%\hrule
\caption{Admissible structural rules.} %We let $X$ be either an EF or IF in $\cut$.}
\label{fig:structural-rules}
\end{figure}

\begin{theorem}\label{thm:admissible-rules} Each calculus $\gtalcee$ possesses the following properties: (i) For all EFs and IFs $X$, $\rcxtl, X, \cxtl \sar \cxtr, X, \rcxtr$ is derivable in $\gtalcee$, (ii) All rules of $\gtalcee$ are hp-invertible, (iii) The $\sub$, $\wkl$, $\wkr$, $\ctrl$, and $\ctrr$ rules are hp-admissible in $\gtalcee$.
%\begin{itemize}
%\item[(i)] For all EFs and IFs $X$, $\rcxtl, X, \cxtl \sar \cxtr, X, \rcxtr$ is derivable in $\gtalcee$.

%\item[(ii)] All rules of $\gtalcee$ are hp-invertible.

%\item[(iii)] The $\sub$, $\wkl$, $\wkr$, $\ctrl$, and $\ctrr$ rules are hp-admissible in $\gtalcee$. %(iv) The $\cut$ rule is admissible in $\gtalcee$.
%\end{itemize}
\end{theorem}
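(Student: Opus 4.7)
The plan is to prove the three claims in the order given, since each later claim leverages the earlier ones. All three proofs go by induction, but on different parameters.

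For \textbf{(i)}, I would argue by induction on the structure of the formula $X$. If $X$ is atomic --- either a concept name $C$ applied to some individual, or an external formula of the form $r(a,b)$ or $a\eq b$ --- the sequent is already an instance of $\idc$ or $\idr$. If $X$ is an IF built from a principal connective, I apply the corresponding left and right introduction rules, reducing the task to identity sequents for the immediate subformulae of $X$, on which the induction hypothesis applies. The only subtleties arise for rules with eigenvariable side conditions ($\existsl$, $\allr$, $\impr$, $\gtnl$, $\ltnr$, $\ddrr$, $\criar$, $\transr$, $\compl$): here I need to pick the eigenvariable fresh for the ambient contexts $\rcxtl,\cxtl,\cxtr,\rcxtr$, which is always possible. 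For EFs of the form $\rrel{r_{1},\ldots,r_{l}}$ or a CRIA, identity reduces, via the corresponding $\ddrr$/$\criar$ on the right followed by the matching $\ddrl$/$\crial$ on the left, to identity sequents for role assertions and equalities, which are handled by $\idr$.

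For \textbf{(ii)}, I would proceed rule by rule, showing that the premise(s) of each rule are derivable with height $\leq n$ whenever the conclusion is, by induction on $n$. The routine cases split on the last rule of the given derivation of the conclusion: if the principal formula of the rule in question is not active in the last inference, we apply the inductive hypothesis to the premises of the last rule and reassemble; if it is active, the premises we seek are already present (or follow by a small computation) in the subderivations. Many left rules (e.g.\ $\conl$, $\alll$, $\existsl$, $\ddrl$) are of the ``repeat principal formula in premise'' type that Negri and Plato introduce precisely to obtain hp-invertibility cheaply: their inverses amount to weakening by copies already present. I would defer hp-admissibility of weakening to the next part, but note that at this stage only \emph{implicit} weakening (adjoining formulae that are already present contextually) is required.

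For \textbf{(iii)}, the order matters: first $\sub$ by induction on derivation height, carefully renaming eigenvariables in eigenvariable-constrained rules to avoid clashes with the substitution target; then $\wkl$ and $\wkr$ by induction on height (the only care, again, is a fresh renaming of eigenvariables, which $\sub$ now supports); and finally $\ctrl$ and $\ctrr$ by induction on height, splitting on the last rule applied and invoking hp-invertibility from (ii) whenever the contracted formula is principal. The crux --- and what I expect to be the main obstacle --- is the contraction case when the contracted formula is principal in a DDR or in a CRIA/number-restriction rule that repeats principal formulae in the premises. This is exactly the situation the \emph{closure condition} was designed to handle: a substitution instance that creates duplicate principal formulae must itself be a rule of the calculus in contracted form, so after applying hp-invertibility to strip one copy and using the inductive hypothesis on the surviving premises, the resulting inference can be closed off by the contracted instance guaranteed by the closure condition. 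The secondary obstacle is bookkeeping for rules with multiple principal formulae across distinct premise groups (e.g.\ $\ltnl$ and $\gtnr$), where contraction of one $r(a,b_i)$ affects the indexed family of premises uniformly; here one must apply hp-invertibility and induction component-wise and then reassemble with the contracted rule instance supplied by closure.
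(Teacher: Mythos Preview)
Your overall strategy matches the paper's: (i) by induction on formula complexity (the paper uses a \emph{weight} function for this purpose), (ii) and (iii) by induction on derivation height, with the closure condition doing the heavy lifting in the contraction cases for DDRs. Your identification of the closure condition as the key device for contraction is exactly what the paper relies on.

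There is, however, a dependency issue in your ordering of (ii) before (iii), and the supporting claim that ``only implicit weakening (adjoining formulae that are already present contextually)'' suffices for (ii) is not correct. Take the inverse of $\alll$: from a height-$n$ derivation of $\rcxtl,\cxtl,r(a,b),a:\all P\sar\cxtr,\rcxtr$ you must produce one of $\rcxtl,\cxtl,r(a,b),a:\all P,b:P\sar\cxtr,\rcxtr$ at height $\leq n$. The auxiliary $b:P$ is \emph{not} present anywhere in the conclusion, so this is genuine weakening. The same holds for $\existsr$, $\ddrl$, $\nomli$, $\eqrepli$, $\eqreplii$, $\eqeucl$, $\compr$, and the inverse-role and number-restriction left rules. (Also, $\conl$ and $\existsl$ are not ``repeat principal formula'' rules at all: their principal formulae disappear in the premise.) The paper therefore \emph{interleaves} (ii) and (iii) rather than doing them sequentially: it proves hp-admissibility of $\sub$ first, then $\wkl$ and $\wkr$ (using $\sub$ for eigenvariable renaming), then hp-invertibility (using weakening outright for the rules just listed, and $\sub$ for the eigenvariable rules), and only then $\ctrl$ and $\ctrr$ (using invertibility and the closure condition). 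If you insist on your order and attempt to prove invertibility of those rules directly by induction on height, you will find you are simply reproving weakening case by case; so reorder as the paper does.
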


\begin{proof} %Details can be found in the online appended version (available at \url{???}). 
 (i) is shown by induction on the weight of $X$ (defined in the appendix), and (ii) and (iii) by induction on the height of the given derivation. Details can be found in the appendix.
%\qed
\end{proof}

\section{Conclusion and Future Work}\label{sec:conclusion}

%%%NOTES:
%- Can discuss extensions and changes to obtain DLs with different underlying logics; e.g. intuitionistic
%- Discuss optimizations to proof systems to ensure decidability (and allow for reasoning services to be solved proof-theoretically), or can prove properties such as interpolation

This paper provides a uniform framework for generating sequent systems on demand for a considerable number of expressive description logics including extensions with role relational axioms. 
All calculi are sound, complete, and possess standard properties. In future work, we aim to optimize our calculi by (i) simplifying the systems through confirming the admissibility of rules (e.g. $\botr$ and $\topl$), (ii) applying a methodology called \emph{structural refinement}~\cite{Lyo21thesis}, which has been used to ready proof systems for use in automated reasoning tasks~\cite{LyoTiuGorClo20,LyoBer19}, and (iii) extending our %uniform and modular 
formalism to a broader set of DLs (e.g. intuitionistic or constructive DLs~\cite{FerFioFio10,dePaiva2006ConstructiveDL,uniba21626}) which can be defined proof-theoretically.  
 
 We note that efficient reasoners, based on tableaux, for expressive DLs do already exist (e.g. HermiT~\cite{GliHorMotBorGioWan14}). %, Pellet~\cite{SirPar06}, and FaCT++~\cite{TsaHor06}. 
 However, since the current paper merely provides a \emph{framework} for constructing sequent systems for expressive DLs, %and excludes the provision of proof-search algorithms that leverage such systems, 
 comparing decision algorithms based on our sequent systems with those based on existing tableaux must be left to future work. %Therefore, another aim is to write algorithms on the basis of (optimized versions of) our proof systems that solve certain reasoning tasks (e.g. knowledge base consistency and query answering~\cite{OrtSim12}) and to compare these with existing reasoners.
 Nevertheless, sequent calculi have proven beneficial in establishing meta-logical properties, and thus, we aim to adapt existing methods for sequent systems to obtain constructive proofs of (various forms of) interpolation (as in~\cite{LyoTiuGorClo20,Mae60}), and to utilize our systems in computing re-writings of concepts and TBoxes. Last, we conjecture that cut-elimination holds for $\gtalc$ when we restrict cuts to IFs, though we aim to investigate various forms of cut-elimination for all of our sequent calculi.

%%
%% Define the bibliography file to be used
\bibliography{bibliography}

%%
%% If your work has an appendix, this is the place to put it.
%%%
\appendix

%%%
\section{Proofs}

%%%

First, let us provide an example illustrating the closure condition by means of the DDR $\funct{r} \leftrightarrow \forall abc (r(a,b) \land r(a,c) \rightarrow b \eq c)$ defining the functionality RRA $\funct{r}$ for the role $r$.

\begin{example} Suppose we include the following DDR in an extension of $\gtalc$:
\begin{center}
\AxiomC{$\rcxtl, \funct{r}, r(a,b), r(a,c), b \eq c, \cxtl \sar \cxtr, \rcxtr$}
\RightLabel{$\functl$}
\UnaryInfC{$\rcxtl, \funct{r}, r(a,b), r(a,c), \cxtl \sar \cxtr, \rcxtr$}
\DisplayProof
\end{center}
By substituting $a$ for $b$ and $c$, we obtain an instance with two copies of the principal formula $r(a,a)$:
\begin{center}
\AxiomC{$\rcxtl, \funct{r}, r(a,a), r(a,a), a \eq a, \cxtl \sar \cxtr, \rcxtr$}
\RightLabel{$\functl'$}
\UnaryInfC{$\rcxtl, \funct{r}, r(a,a), r(a,a), \cxtl \sar \cxtr, \rcxtr$}
\DisplayProof
\end{center}
By the closure condition, the following rule is required to be in our calculus as well:
\begin{center}
\AxiomC{$\rcxtl, \funct{r}, r(a,a), a \eq a, \cxtl \sar \cxtr, \rcxtr$}
\RightLabel{$\functl''$}
\UnaryInfC{$\rcxtl, \funct{r}, r(a,a), \cxtl \sar \cxtr, \rcxtr$}
\DisplayProof
\end{center}
\end{example}

\medskip

Second, before proving Theorem 1 and Theorem 2, we define the \emph{weight} of a formula.

\begin{definition}[Formula Weight]\label{def:formula-weight} We let $a, b \in \inames$, and define the \emph{weight} of IFs and EFs inductively as shown below:

\begin{itemize}

\item $\fcomp{a \eq b} = 1$

\item $\fcomp{a : \top} = \fcomp{a : \bot} = 1$

\item For $C \in \cnames$, $\fcomp{a : C} = 1$

\item For $r \in \rnames$, $\fcomp{r(a,b)} = 1$

\item $\fcomp{\{a\}} = 2$

\item $\fcomp{a \ineq b} = 2$

\item $\fcomp{\some \self} = 2$

\item $\fcomp{\neg r(a,b)} = 2$

\item $\fcomp{\leqslant n r. \top} = \fcomp{\geqslant n r. \top} = 2$

\item $\fcomp{(r_{1} \comp \cdots \comp r_{n})(a,b)} = \sum_{i = 1}^{n} \fcomp{r_{i}(a,b)} = n$

\item $\fcomp{\rrel{r_{1}, \ldots, r_{l}}} = 1 + \sum_{i = 1}^{n} \fcomp{F_{i}} + \sum_{j = 1}^{k} \fcomp{G_{j}} = 1 + n + k$

\item $\fcomp{a : P \dis Q} = \fcomp{a : P \dis Q} = \fcomp{P \imp Q} = max\{\fcomp{P},\fcomp{Q}\} + 1$

\item $\fcomp{r_{1} \comp \cdots \comp r_{n} \imp r} = \fcomp{(r_{1} \comp \cdots \comp r_{n})(a,b)} + \fcomp{r(a,b)} + 1 = n + 2$

\item $\fcomp{a : \neg P}  = \fcomp{\some P} = \fcomp{\all P} = \fcomp{\leqslant n r. P} = \fcomp{\geqslant n r. P} = \fcomp{P} + 1$

\end{itemize}
%We note that in the definition of $\fcomp{\rrel{r_{1}, \ldots, r_{l}}}$, the $F_{i}$ and $G_{j}$ are those role assertions or equalities that occur in the definiens of the corresponding descriptive definition.
\end{definition}

\begin{customthm}{\ref{thm:G3ALC*-sound-complete}}
$\rcxtl, \cxtl \sar \cxtr,\rcxtr$ is derivable in $\gtalcee$ \ifandonlyif $\models \rcxtl, \cxtl \sar \cxtr,\rcxtr$.
\end{customthm}

\begin{proof} To prove completeness we impose a cyclic order $<$ on the rules in $\gtalcee$ and consider each rule in turn. We start with the sequent $\rcxtl, \cxtl \sar \cxtr, \rcxtr$ and apply rules in a bottom-up fashion attempting to construct a proof. Also, we assume a linear order $\prec$ on all individuals $a$. We choose an arbitrary rule $(R)$ in the cyclic order to begin the following procedure:
\begin{itemize}

\item[(1)] If $(R)$ is a rule with its principal formulae occurring in its conclusion, then for each formula in each top-sequent of each open branch of the proof (i.e. a top sequent in the derivation which is not an instance of $\idc$, $\idr$, $\botl$, $\topr$, $\nomrii$, $\eqr$, or $\univr$) for which $(R)$ is bottom-up applicable, apply $(R)$ bottom-up. If the rule makes use of eigenvariables $a_{1}, \ldots, a_{n}$, then select the first $n$ individuals in the linear order $\prec$ that do not yet occur in the derivation to bottom-up apply the rule. 

\item[(2)] Otherwise, $(R)$ is a rule with no principal formula in its conclusion (i.e. $(R)$ is one of the rules $\botr$, $\topr$, $\nomlii$, $\eql$, or $\univl$). For each top-sequent of each open branch of the proof, select the minimal individual(s) $a$ (and $b$) in the linear order $\prec$ such that $(R)$ has not yet introduced $a : \bot$, $a :\top$, $a : \{a\}$, $a \eq a$, or $\univ(a,b)$, respectively, and apply $(R)$ bottom-up.

\item[(3)] If $(R')$ is the next rule in the cyclic order, then set $(R) := (R')$ and go to step (1) above.

\end{itemize}
To simplify our proof, we assume that if no rules are applicable to a top-sequent in an open branch of the derivation being constructed via the above procedure, then we copy the top-sequent an infinite number of times to create an infinite branch. This simplifying assumption is relevant to our application of K\H{o}nig's lemma below.

Let us assume that $\rcxtl, \cxtl \sar \cxtr, \rcxtr$ is not derivable. We will show that $\rcxtl, \cxtl \sar \cxtr, \rcxtr$ is invalid, implying completeness. Since $\rcxtl, \cxtl \sar \cxtr, \rcxtr$ is not derivable, we know that the above procedure will not find a derivation of the sequent, implying (together with our simplifying assumption) that our derivation will be infinitely large. By K\H{o}nig's lemma, we know that an infinitely long path $\branch$ must occur in the derivation since bottom-up applications of rules only permit finite branching. Let $\branch := \{\rcxtl_{i}, \cxtl_{i} \sar \cxtr_{i}, \rcxtr_{i} \ | \ i \in \mathbb{N}\}$ be one such infinite path, and define:
$$
\Theta := \bigcup_{i \in\mathbb{N}} \rcxtl_{i}, \cxtl_{i}
\qquad
\Omega := \bigcup_{i \in\mathbb{N}} \rcxtr_{i}, \cxtr_{i}
$$
We now use $\Theta$ and $\Omega$ to construct an interpretation $\inter = (\dom, \map{\cdot})$ such that $\inter \not\models \rcxtl, \cxtl \sar \cxtr, \rcxtr$. Let $a$ and $b$ be two individuals, and define $a \sim b$ \ifandonlyif $a \eq b \in \Theta$. It is not difficult to show that $a \sim b$ is an equivalence relation: (i) Since the $\eql$ rule will be applied for each individual $a$ in $\branch$, we have that $a \eq a \in \Theta$ for each individual $a$, and (ii) If $a \sim b$ and $a \sim c$, then $a \eq b, a \eq c \in \Theta$, and so by our procedure, at some step $\eqeucl$ will have been applied in the infinite path $\branch$, meaning that $b \eq c \in \Theta$; hence, $b \sim c$ holds. We define $\ec{a} := \{b \in \inames \ | \ a \sim b\}$.

\begin{itemize}

\item $\dom := \{ \ec{a} \ | \ a \text{ occurs in $\branch$.} \}$;

\item For $r \in \rnames$, $(\ec{a}, \ec{b}) \in \map{r}$ \ifandonlyif $r(a,b) \in \Theta$;

\item For $C \in \cnames$, $\ec{a} \in \map{C}$ \ifandonlyif $a : C \in \Theta$;

\item For $a \in \inames$, $\map{a} = \ec{a}$.

\end{itemize}

We extend $\map{\cdot}$ to all complex formulae according to the clauses specified in \sect~\ref{sec:log-prelims}. We now show that (i) if $X \in \Theta$, then $\inter \models X$ and (ii) if $X \in \Omega$, then $\inter \not\models X$. We prove (i) and (ii) by a simultaneous induction on the weight of $X$.\\

\noindent
$a : C$. Let $C \in \cnames$. (i) Suppose that $a : C \in \Theta$. Then, by definition, $\ec{a} \in \map{C}$. (ii) If $a : C \in \Omega$, then since $\branch$ is infinitely long, it cannot be the case that $a : C \in \Theta$ since then $\idc$ would be applied at some point in $\branch$, implying its finiteness. Therefore, $a : C \not\in \Theta$, meaning that $\ec{a} \not\in \map{C}$ by definition.\\

\noindent
$a : \bot$. (i) We know that $a : \bot \not\in \Theta$ since otherwise $a : \bot$ would occur in the antecedent of some sequent along $\branch$, implying that $\botl$ would be applied at some point by our procedure, and $\branch$ would be finite. Hence, the claim follows vacuously. (ii) By the definition of $\inter$, we know that $\map{\bot} = \emptyset$, implying that $\inter \not\models a : \bot$ for all individuals $a$. Hence, the claim follows vacuously.\\

\noindent
$a :\top$. Similar to the previous case.\\

\noindent
$a : \neg P$. (i) Assume that $a : \neg P \in \Theta$. By our procedure, the $\negr$ rule will be applied at some stage $n$ introducing $a : P$ in the consequent of a sequent occurring in $\branch$. By \ih, $\inter \not\models a : P$, implying that $\inter \models \neg P$. (ii) Similar to the proof of claim (i).\\

\noindent
$a : P \dis Q$. (i) Assume that $a : P \dis Q \in \Theta$. Then, by our procedure, the $\disl$ rule will be applied at some point in $\branch$ introducing either $a : P$ or $a : Q$ into the antecedent of some sequent along $\branch$. Let us suppose w.l.o.g. that $a : P$ was introduced. Then, by \ih, we know that $\inter \models a : P$, and so, $\inter \models a : P \dis Q$. (ii) Assume that $a : P \dis Q \in \Omega$. By our procedure, the $\disr$ rule is applied infinitely often, introducing $a : P, a : Q$ into the consequent of $\branch$ at some point. By \ih, it follows that $\inter \not\models a : P$ and $\inter \not\models a : Q$, meaning that $\inter \not\models a : P \dis Q$.\\

\noindent
$a : P \con Q$. Similar to the disjunction case above.\\

\noindent
$P \imp Q$. (i) Let $P \imp Q \in \Theta$. We aim to show that $\inter \models P \imp Q$. Therefore, let us assume that $a^{\inter} \in P^{\inter}$ with the goal of proving that $a^{\inter} \in Q^{\inter}$. By the definition of $\inter$ we know that $a : P \in \Theta$, which implies that $\impl$ will eventually be applied at some point in $\branch$, introducing $a : Q$ into the antecedent of a sequent occurring in $\branch$. Hence, $\inter \models a : Q$ by IH, i.e. $a^{\inter} \in Q^{\inter}$, meaning that $\inter \models P \imp Q$ since $a$ was arbitrary. (ii) Let $P \imp Q \in \Omega$. We aim to show that $\inter \not\models P \imp Q$. Therefore, we aim to show that there exists a $b^{\inter}$ such that $b^{\inter} \in P^{\inter}$, but $b^{\inter} \not\in Q^{\inter}$. By our procedure we know that $\impr$ will eventually be applied at some point in $\branch$, thus introducing $b : P$ into the antecedent of some sequent along $\branch$ and $b : Q$ into the consequent of some sequent along $\branch$, with $b$ fresh. This implies that $b : P \in \Theta$ and $b : Q \in \Omega$, which implies that $\inter \models b : P$ and $\inter \not\models b : Q$ by IH, showing that $\inter \not\models P \imp Q$.\\

\noindent
$a : \some P$. (i) Let $a : \some P \in \Theta$. By our procedure, the $\existsl$ rule will be applied at some point, and will introduce $r(a,b), b : P$ into the antecedent of some sequent along $\branch$ with $b$ fresh. By the $r(a,b)$ case below, we know that $\inter \models r(a,b)$, and by \ih, we know that $\inter \models b : P$. Therefore, $\inter \models \some P$. (ii) Suppose that $a : \some P \in \Omega$. If no relational atom of the form $r(a,b)$ exists in $\Theta$, then the claim follows trivially. Let $r(a,b)$ be an arbitrary relational atom in $\Theta$. By our procedure, the $\existsr$ rule will be applied infinitely often in $\branch$, implying that at some point the formula $b : P$ will be introduced into the consequent of a sequent in $\branch$ for all $r(a,b) \in \Theta$. Hence, if $\inter \models r(a,b)$, then by \ih, $\inter \not\models b : P$. Thus, $\inter \not\models a : \some P$.\\

\noindent
$a : \all P$. Similar to the $\some P$ case above.\\

\noindent
$a : \{b\}$. (i) Suppose that $a : \{b\} \in \Theta$. Then, by the above procedure we know that $\noml$ will be applied infinitely often, and will eventually introduce $a \eq b$ into the antecedent of some sequent in $\branch$, implying that $a \eq b \in \Theta$. Therefore, $\map{a} = \ec{a} = \ec{b} \in \{\ec{b}\} = \{\map{b}\} = \map{\{b\}}$. To complete the proof of the claim, we need to show that $\map{\{b\}}$ is a singleton. We know that $\map{\{b\}}$ has at least one element by definition because $a : \{b\} \in \Theta$. We therefore suppose that $\ec{c}, \ec{d} \in \map{\{b\}}$, and show that $\ec{c} = \ec{d}$. By the definition of $\inter$, we have that $c : \{b\}, d : \{b\} \in \Theta$, meaning that by our procedure $\noml$ and $\eqeucl$ will be applied in $\branch$ and will introduce $c \eq d$. Since $c \eq d \in \Theta$, we have that $\ec{c} = \ec{d}$. Therefore, $\inter \models a : \{b\}$ and $\{b\}$ is a proper nominal. (ii) Suppose that $a : \{b\} \in \Omega$. By the above procedure $\nomr$ will be applied in $\branch$ at some point and will introduce $a \eq b$ into the consequent of some sequent, meaning that $a \eq b \in \Omega$. Since $\branch$ is infinitely long, we know that $a \eq b \not\in \Theta$, since otherwise $\idr$ would be applied and force $\branch$ to be finite. By definition then, $a \not\sim b$, meaning that $\ec{a} \neq \ec{b}$. If we can show that $\map{\{b\}}$ is a singleton, then we know that $\map{a} = \ec{a} \not\in \map{\{b\}}$, and the claim will be proven. Since the $\nomlii$ rule will be applied infinitely often, we know that at some point $b : \{b\}$ will be introduced into the antecedent of a sequent, meaning that $b : \{b\} \in \Theta$, which implies that $\map{b} = \ec{b} \in \{\ec{b}\} = \map{\{b\}}$. Now that we have confirmed that $\map{\{b\}}$ is non-empty, we want to show that all elements it contains are identical, that is, it contains a single element. Let us suppose that $\ec{c}, \ec{d} \in \map{\{b\}}$; we aim to show that $\ec{c} = \ec{d}$. By our supposition $c : \{b\}, d : \{b\} \in \Theta$, and so by our procedure, $\noml$ and $\eqeucl$ will be applied in $\branch$ introducing $c \eq b, d \eq b, c \eq d$ into the antecedent of some sequent in $\branch$, which implies that $\ec{c} = \ec{d}$ since  $c \eq b, c \eq d, c \eq d \in \Theta$. Hence, $\map{\{b\}}$ is a singleton, implying that $\inter \not\models a : \{b\}$.\\

\noindent
$a : (\leqslant n r. P)$. (i) Let $a : (\leqslant n r. P) \in \Theta$. Suppose for $\ec{b_{0}}, \ldots, \ec{b_{n}} \in \dom$ that $\map{r}(\ec{a},\ec{b_{0}}), \ldots, \map{r}(\ec{a},\ec{b_{n}})$ hold. Then, $r(a,b_{0}), \ldots, r(a,b_{n}) \in \Theta$, implying that $\ltnl$ will be applied at some point in $\branch$ introducing either a formula $b_{i} : P$ in the consequent of some sequent of $\branch$, or $b_{i} \eq b_{j}$ in the antecedent of some sequent of $\branch$, for $0 \leq i < j \leq n$. In the former case, $\inter \not\models b_{i} : P$ by \ih, and in the latter case, $\ec{b_{i}} = \ec{b_{j}}$. Hence, for any $n+1$ elements of the domain, either one element does not satisfy $P$ in $\inter$, or two of the elements are identical. It follows that $\inter \models a : (\leqslant n r. P)$. (ii) Let $a : (\leqslant n r. P) \in \Omega$. Then, at some point in $\branch$ we know that $\ltnr$ will be applied and will introduce $r(a,b_{0}), \ldots, r(a,b_{n}), b_{0} : P, \ldots, b_{n} : P$ into the antecedent of some sequent, where all $b_{0}, \ldots, b_{n}$ are fresh. By \ih, we have that $\inter \models r(a,b_{i})$ and $\inter \models b_{i} : P$ for $0 \leq i \leq n$. To prove the claim, we need to additionally show that $\ec{b_{i}} \neq \ec{b_{j}}$ for $0 \leq i < j \leq n$. Since all $b_{i}$ are fresh, it is clear from observing the rules of our calculus that no equality of the form $b_{i} \eq b_{j}$ can be introduced into the antecedent of a sequent in $\branch$. Therefore,  $\ec{b_{i}} \neq \ec{b_{j}}$ for $0 \leq i < j \leq n$, implying that $\inter \not\models a : (\leqslant n r. P)$.\\

\noindent
$a : (\geqslant n r. P)$. Similar to previous case.\\

\noindent
$a : (\leqslant n r. \top)$. Similar to the $a : (\leqslant n r. P)$ case above.\\

\noindent
$a : (\geqslant n r. \top)$. Similar to the $a : (\geqslant n r. P)$ case above.\\

\noindent
$r(a,b)$. (i) Follows from the definition of $\inter$. (ii) Suppose that $r(a,b) \in \Omega$. Then, $r(a,b) \not\in \Theta$, implying that $(\ec{a}, \ec{b}) = (\map{a},\map{b}) \not\in \map{r}$. Hence, $\inter \not\models r(a,b)$.\\

\noindent
$(r \comp s)(a,b)$. (i) Let $(r \comp s)(a,b) \in \Theta$. By our procedure, $\compl$ will be applied at some point in $\branch$, introducing $r(a,c), s(c,b)$ into the antecedent of some sequent of $\branch$ with $c$ fresh. Therefore, there exists a $\ec{c} \in \dom$ such that $\inter \models r(a,c)$ and $\inter \models s(c,b)$, implying that $\inter \models (r \comp s)(a,b)$. (ii) Let $(r \comp s)(a,b) \in \Omega$. By our procedure, $\compr$ will introduce either $r(a,c)$ or $s(c,b)$ for each individual $c$ through continual application in $\branch$, meaning that $r(a,c)$ or $s(c,b)$ will be in $\Omega$ for each individual $c$. Hence, for each $\ec{c} \in \dom$, either $\inter \not\models r(a,c)$ or $\inter \not\models s(c,b)$, meaning that $\inter \not\models (r \comp s)(a,b)$.\\

\noindent
$\inv{r}(a,b)$. (i) Suppose that $\inv{r}(a,b) \in \Theta$. Then, by our procedure, $\invii$ will be applied at some point introducing $r(b,a)$ into the antecedent of some sequent, meaning that $r(b,a) \in \Theta$. Therefore, $\inter \models r(b,a)$, implying that $\inter \models \inv{r}(a,b)$. (ii) Suppose that $\inv{r}(b,a) \in \Omega$. By our procedure, $\inviv$ will be applied in $\branch$ introducing $r(a,b)$ into the consequent of some sequent, implying that $r(a,b) \in \Omega$. Thus, $\inter \not\models r(a,b)$, meaning that $\inter \not\models \inv{r}(b,a)$. We note that the two rules $\invi$ and $\inviii$ are used to conversely establish that $r$ is the inverse of the role $\inv{r}$ occurring in $\branch$.\\

\noindent
$\neg r(a,b)$. Similar to $a : \neg P$ case above.\\

\noindent
$\rrel{r_{1}, \ldots, r_{l}}$. We assume that standard logical connectives such as $\forall$, $\lor$, $\land$, and $\rightarrow$ are interpreted on $\inter$ in the usual way. (i) Suppose that $\rrel{r_{1}, \ldots, r_{l}} \in \Theta$ and let $\inter \models F_{i}$ for each $i \in \{1, \ldots, n\}$. Then, by the definition of $\inter$, we have that $\overline{F} = F_{1}, \ldots, F_{n} \in \Theta$. Hence, at some point in $\branch$ the $\ddrl$ rule will be applied, meaning that for some $j \in \{1, \ldots, k\}$, $G_{j} \in \Theta$. It follows that the definiens of the descriptive definition of $\rrel{r_{1}, \ldots, r_{l}}$ holds in $\inter$, implying that $\inter \models \rrel{r_{1}, \ldots, r_{l}}$. (ii) Suppose that $\rrel{r_{1}, \ldots, r_{l}} \in \Omega$. Then, at some point $\ddrr$ will be applied in $\branch$, introducing $\overline{F}$ into the antecedent of a sequent in $\branch$ and $\overline{G}$ into the consequent of the same sequent with the eigenvariables of the inference fresh. Hence, for some $a_{1}, \ldots, a_{m}$, $\inter \models F_{i}$ for each $i \in \{1, \ldots, n\}$, and $\inter \not\models G_{j}$ for each $j \in \{1, \ldots, k\}$, meaning that the definiens of the descriptive definition of $\rrel{r_{1}, \ldots, r_{l}}$ is not satisfied by $\inter$. This implies that $\inter \not\models \rrel{r_{1}, \ldots, r_{l}}$.\\

\noindent
$r_{1} \comp \cdots \comp r_{n} \imp r$. (i) Let $r_{1} \comp \cdots \comp r_{n} \imp r \in \Theta$. Then, at some point in $\branch$ the $\crial$ rule is applied introducing either $(r_{1} \comp \cdots \comp r_{n})(a,b)$ into the consequent of a sequent in $\branch$ or $r(a,b)$ into the antecedent of a sequent in $\branch$. By \ih, either $\inter \not\models (r_{1} \comp \cdots \comp r_{n})(a,b)$ or $\inter \models r(a,b)$, meaning that $\map{(r_{1} \comp \cdots \comp r_{n})} \subseteq \map{r}$, i.e. $\inter \models r_{1} \comp \cdots \comp r_{n} \imp r$. (ii) Let $r_{1} \comp \cdots \comp r_{n} \imp r \in \Omega$. Then, at some point in $\branch$, the rule $\criar$ will be applied introducing $(r_{1} \comp \cdots \comp r_{n})(a,b)$ into the antecedent of a sequent and $r(a,b)$ into the consequent with both $a$ and $b$ fresh. By \ih, there is some $(a,b)$ such that $\inter \models (r_{1} \comp \cdots \comp r_{n})(a,b)$ and $\inter \not\models r(a,b)$. Hence, $\inter \not\models r_{1} \comp \cdots \comp r_{n} \imp r$.\\

\noindent
$a \eq b$. (i) Suppose that $a \eq b \in \Theta$. Then, by definition we have $a \sim b$, implying that $\map{a} = \ec{a} = \ec{b} = \map{b}$. Therefore, $\inter \models a \eq b$. (ii) Suppose that $a \eq b \in \Omega$. Then, $a \eq b \not\in \Theta$ since otherwise $\idr$ would be applied and $\branch$ would be finite. Hence, $a \not\sim b$, meaning that $\map{a} = \ec{a} \neq \ec{b} = \map{b}$. It follows that $\inter \not\models a \eq b$.\\

\noindent
$a \ineq b$. (i) Suppose that $a \eq b \in \Theta$. Then, at some point in $\branch$ the $\ineql$ rule will be applied, introducing $a \eq b$ into the consequent of a sequent of $\branch$. It follows that $a \eq b \not\in \Theta$ since otherwise $\idr$ would be applied, implying the finiteness of $\branch$. Hence, $a \not\sim b$, meaning that $\map{a} = \ec{a} \neq \ec{b} = \map{b}$, and so, $\inter \models a \ineq b$. (ii) Suppose that $a \eq b \in \Omega$. Then, at some point in $\branch$ the $\ineqr$ rule will be applied, introducing $a \eq b$ into the antecedent of a sequent of $\branch$. Hence, $a \sim b$ holds by definition, meaning that $\map{a} = \ec{a} = \ec{b} = \map{b}$. Consequently, $\inter \not\models a \ineq b$.\\

\noindent
$\univ(a,b)$. (i) By the repeated application of the $\univl$ rule we know that $\univ(a,b)$ will occur in $\Theta$ for each $a$ and $b$. Hence, $\inter \models \univ(a,b)$, meaning that the claim holds. (ii) By the $\univr$ rule we know that $\univ(a,b)$ cannot occur in the consequent of a sequent of $\branch$ since then $\univr$ would be applied and $\branch$ would be finite. Hence, the claim follows vacuously.\\

\noindent
$a : \some \self$. (i) Let $a :\some \self \in \Theta$. Then, at some point in $\branch$, the rule $\selfl$ will be applied, introducing $r(a,a)$ into the antecedent of some sequent in $\branch$. It follows that $\inter \models r(a,a)$, implying that $\inter \models a : \some \self$. (ii) Let $a :\some \self \in \Omega$. Then, at some point in $\branch$, the rule $\selfr$ will be applied, introducing $r(a,a)$ into the consequent of some sequent in $\branch$. It cannot be the case that $r(a,a)$ occurs in $\Theta$ since otherwise it will occur in the antecedent of some sequent of $\branch$, and the $\idr$ rule will be applied at some point ensuring the finiteness of $\branch$. It follows that $\inter \not\models r(a,a)$, implying that $\inter \not\models a : \some \self$.
%\qed
\end{proof}

\begin{customthm}{\ref{thm:admissible-rules}} Each calculus $\gtalcee$ possesses the following properties:
\begin{itemize}
\item[(i)] For all EFs and IFs $X$, $\rcxtl, X, \cxtl \sar \cxtr, X, \rcxtr$ is derivable in $\gtalcee$.

\item[(ii)] All rules of $\gtalcee$ are hp-invertible.

\item[(iii)] The $\sub$, $\wkl$, $\wkr$, $\ctrl$, and $\ctrr$ rules are hp-admissible in $\gtalcee$. %(iv) The $\cut$ rule is admissible in $\gtalcee$.
\end{itemize}
\end{customthm}

\begin{proof} We argue each claim accordingly: (i) The claim is shown by induction on the weight of $X$. We show the $\{a\}$, $\geqslant n r . P$, $(r_{1} \comp \cdots \comp r_{n})(a,b)$, $\rrel{r_{1}, \ldots, r_{l}}$, and $r_{1} \comp \cdots \comp r_{n} \imp r$ cases; all remaining cases are simple or similar.

\begin{center}
\AxiomC{$\rcxtl, a \eq b, a : \{b\}, \cxtl \sar \cxtr, a : \{b\}, a \eq b, \rcxtr$}
\RightLabel{$\nomr$}
\UnaryInfC{$\rcxtl, a \eq b, a : \{b\}, \cxtl \sar \cxtr, a : \{b\}, \rcxtr$}
\RightLabel{$\nomli$}
\UnaryInfC{$\rcxtl, a : \{b\}, \cxtl \sar \cxtr, a : \{b\}, \rcxtr$}
\DisplayProof
\end{center}

\begin{center}
\begin{center}
%\resizebox{\columnwidth}{!}{
\AxiomC{$\Big\{ \rcxtl, \rcxtl', \cxtl, \cxtl' \sar b_{i} : P, a : (\geqslant n r . P), \cxtr, \rcxtr', \rcxtr \ |\ 1 \leq i \leq n \Big\} \cup$}
\noLine
\UnaryInfC{$\Big \{\rcxtl, b_{i} \eq b_{j}, \rcxtl', \cxtl, \cxtl' \sar \cxtr, \rcxtr', \rcxtr \ | \ 0 \leq i < j \leq n \Big \}$}
\RightLabel{$\gtnr$}
\UnaryInfC{$\rcxtl, \rcxtl', \cxtl, \cxtl' \sar a : (\geqslant n r . P), \cxtr, \rcxtr', \rcxtr$}
\RightLabel{$\gtnl$}
\UnaryInfC{$\rcxtl, \cxtl, a : (\geqslant n r . P) \sar a : (\geqslant n r . P), \cxtr, \rcxtr$}
\DisplayProof
%}
\end{center}
\end{center}

$\rcxtl' := r(a,b_{1}), \ldots, r(a,b_{n})$

$\rcxtr' := \{b_{i} \eq b_{j} \ | \ 1 \leq i < j \leq n \}$

$\cxtl' :=  b_{1} : P, \ldots, b_{n} : P$

\begin{center}
\AxiomC{$\Lambda_{1}$}
\AxiomC{$\Lambda_{2}$}
\RightLabel{$\compr$}
\BinaryInfC{$\rcxtl, (r_{1} \comp \cdots \comp r_{n-1})(a,c), r_{n}(c,b), \cxtl \sar \cxtr, (r_{1} \comp \cdots \comp r_{n})(a,b), \rcxtr$}
\RightLabel{$\compl$}
\UnaryInfC{$\rcxtl, (r_{1} \comp \cdots \comp r_{n})(a,b), \cxtl \sar \cxtr, (r_{1} \comp \cdots \comp r_{n})(a,b), \rcxtr$}
\DisplayProof
\end{center}

\begin{flushleft}
$\Lambda_{1} := \rcxtl, (r_{1} \comp \cdots \comp r_{n-1})(a,c), r_{n}(c,b), \cxtl \sar$
\end{flushleft}
\begin{flushright}
$\cxtr, (r_{1} \comp \cdots \comp r_{n})(a,b), (r_{1} \comp \cdots \comp r_{n-1})(a,c), \rcxtr$
\end{flushright}

$\Lambda_{2} := \rcxtl, (r_{1} \comp \cdots \comp r_{n-1})(a,c), r_{n}(c,b), \cxtl \sar \cxtr, (r_{1} \comp \cdots \comp r_{n})(a,b), r_{n}(a,c), \rcxtr$

\begin{center}
\AxiomC{$\Big\{ \rcxtl, \rrel{r_{1}, \ldots, r_{l}}, \overline{F}, G_{j}, \cxtl \sar \cxtr, \overline{G}, \rcxtr \ | \ 1 \leq j \leq k  \Big\}$}
\RightLabel{$\ddrl$}
\UnaryInfC{$\rcxtl, \rrel{r_{1}, \ldots, r_{l}}, \overline{F}, \cxtl \sar \cxtr, \overline{G}, \rcxtr$}
\RightLabel{$\ddrr$}
\UnaryInfC{$\rcxtl, \rrel{r_{1}, \ldots, r_{l}}, \cxtl \sar \cxtr, \rrel{r_{1}, \ldots, r_{l}}, \rcxtr$}
\DisplayProof
\end{center}

\begin{center}
\AxiomC{$\Lambda_{1}$}
\AxiomC{$\Lambda_{2}$}
\RightLabel{$\crial$}
\BinaryInfC{$\rcxtl, r_{1} \comp \cdots \comp r_{n} \imp r, (r_{1} \comp \cdots \comp r_{n})(a,b), \cxtl \sar  \cxtr, r(a,b), \rcxtr$}
\RightLabel{$\criar$}
\UnaryInfC{$\rcxtl, r_{1} \comp \cdots \comp r_{n} \imp r, \cxtl \sar \cxtr, r_{1} \comp \cdots \comp r_{n} \imp r, \rcxtr$}
\DisplayProof
\end{center}

$\Lambda_{1} := \rcxtl, r_{1} \comp \cdots \comp r_{n} \imp r, (r_{1} \comp \cdots \comp r_{n})(a,b), \cxtl \sar  \cxtr, (r_{1} \comp \cdots \comp r_{n})(a,b), r(a,b), \rcxtr$

$\Lambda_{2} := \rcxtl, r_{1} \comp \cdots \comp r_{n} \imp r, (r_{1} \comp \cdots \comp r_{n})(a,b), r(a,b), \cxtl \sar  \cxtr, r(a,b), \rcxtr$\\

\noindent
(ii) The hp-invertibility of $\botr$, $\topl$, $\existsr$, $\alll$, $\compr$, $\nomli$, $\nomlii$, $\nomr$, $\invi$, $\invii$, $\inviii$, $\inviv$, $\ltunl$, $\ltnl$, $\gtunr$, $\gtnr$, $\eql$, $\eqrepli$, $\eqreplii$, $\eqeucl$, $\univl$, and all $\ddrl$ rules follows from the hp-admissibility of $\wkl$ and $\wkr$ shown in (iii) below. All remaining cases are shown by induction on the height of the given derivation and may invoke the hp-admissibility of $\sub$ (in cases where there are eigenvariables) argued in (iii) below.\\

\noindent
(iii) All results are shown by induction on the height of the given derivation. % and are similar to the proofs of the same claims (\lem~4.1.2, \thm~4.2.7, and \thm~4.2.9 of~\cite{NegPla00}) for classical first-order logic. 
 We note that the hp-admissibility of $\wkl$ and $\wkr$ relies on the hp-admissibility of $\sub$, and the hp-admissibility of $\ctrl$ and $\ctrr$ relies on the hp-invertibility of certain rules.

%\noindent
%(iv) To show $\cut$ sound, we assume that there exists and interpretation $\inter$ such that $\inter \not\models \rcxtl, \rcxtl', \cxtl, \cxtl' \sar \cxtr, \cxtr', \rcxtr, \rcxtr'$. Then, for any given expression $X$, which is either an EF or IF, we know that either $\inter \models X$ or $\inter \not\models X$. In the first case, the right premise of $\cut$ is falsified by $\inter$, and the second case, the left premise of $\cut$ is falsified by $\inter$. Since $\cut$ is sound and $\gtalcee$ is complete, we know that anything provable with $\cut$ is provable without $\cut$, implying the admissibility of the rule.
%\qed
\end{proof}

\end{document}

%%
%% End of file